\theoremstyle{plain}
\newtheorem{theorem}{Theorem}[section]
\newtheorem{lemma}[theorem]{Lemma}
\newtheorem{prop}[theorem]{Proposition}
\newtheorem*{maintheorem}{\Cref{thm:main}}
\theoremstyle{definition}
\newtheorem{definition}[theorem]{Definition}
\newtheorem{fact}[theorem]{Fact}
\newcommand {\br} [1] {\ensuremath{ \left( #1 \right) }}
\newcommand {\Br} [1] {\ensuremath{ \left[ #1 \right] }}
\newcommand {\cbr} [1] {\ensuremath{ \left\lbrace #1 \right\rbrace }}
\newcommand {\minusspace} {\: \! \!}
\newcommand {\smallspace} {\: \!}
\newcommand {\fn} [2] {\ensuremath{ #1 \minusspace \br{ #2 } }}
\newcommand {\Fn} [2] {\ensuremath{ #1 \minusspace \Br{ #2 } }}
\newcommand {\defeq} {\ensuremath{ \stackrel{\mathrm{def}}{=} }}
\newcommand {\mutinf} [2] {\fn{\mathrm{I}}{#1 : #2}}
\newcommand {\condmutinf} [3] {\mutinf{#1}{#2 \smallspace \middle\vert \smallspace #3}}
\newcommand {\prob} [1] {\Fn{\Pr}{#1}}
\newcommand {\abs} [1] {\ensuremath{ \left| #1 \right| }}
\newcommand {\norm} [1] {\ensuremath{ \left\| #1 \right\| }}
\newcommand {\normsub} [2] {\ensuremath{ \norm{#1}_{#2} }}
\newcommand {\onenorm} [1] {\normsub{#1}{1}}
\newcommand {\onenormbigg} [1] {\ensuremath{ \bigg\| #1 \bigg\|_{1} }}
\newcommand {\relent} [2] {\fn{\mathrm{S}}{#1 \middle\| #2}}
\newcommand {\rminent} [2] {\fn{\mathrm{S}_{\infty}}{#1 \middle\| #2}}
\DeclareMathOperator*{\bigE}{\mathbb{E}}
\newcommand {\expec} [2] {\Fn{\bigE_{\substack{#1}}}{#2}}
\newcommand {\expecbig} [2] {\ensuremath{ \bigE_{\substack{#1}} \minusspace \big[ #2 \big] }}
\newcommand {\email} [1] {\href{mailto:#1}{#1}}
\newcommand {\bigomega} [1] {\fn{\Omega}{#1}}
\newcommand {\ie} {i.e.,}
\newcommand {\finset} [1] {\ensuremath{\CMcal{#1}}}
\newcommand {\bra} [1] {\ensuremath{ \left\langle #1 \right| }}
\newcommand {\ket} [1] {\ensuremath{ \left| #1 \right\rangle }}
\newcommand {\ketbratwo} [2] {\ensuremath{ \left| #1 \middle\rangle \middle\langle #2 \right| }}
\newcommand {\ketbra} [1] {\ketbratwo{#1}{#1}}
\newcommand {\cspace} [1] {\ensuremath{\mathnormal{#1}}}
\newcommand {\register} [1] {\ensuremath{ \mathnormal{#1} }}
\newcommand {\unitary} [1] {\ensuremath{ \mathbf{#1} }}
\newcommand {\ceil} [1] {\ensuremath{ \left\lceil #1 \right\rceil }}
\newcommand {\floor} [1] {\ensuremath{ \left\lfloor #1 \right\rfloor }}
\newcommand {\Tsr} {\ensuremath{ T^{\br{r}} }}
\newcommand {\Tr} {\ensuremath{ \mathrm{Tr} }}
\newcommand {\partrace} [2] {\fn{\Tr_{#1}}{#2}}
\newcommand {\id} {\ensuremath{\mathds{1}}}
\newcommand {\comp} [1] {\bar{#1}}
\newcommand {\xtildesp} {\tilde{\cspace{X}}}
\newcommand {\ytildesp} {\tilde{\cspace{Y}}}
\newcommand {\xtildereg} {\tilde{\register{X}}}
\newcommand {\ytildereg} {\tilde{\register{Y}}}
\newcommand {\finalst} {\ensuremath{\varphi}}
\newcommand {\startingst} {\ensuremath{\theta}}
\newcommand {\adjoint} [1] {\ensuremath{#1^{*}}}
\newcommand {\Bob} {\ensuremath{\tilde{\register{B}}}}
\newcommand {\Alice} {\ensuremath{\tilde{\register{A}}}}
\newcommand {\Prot}{\mathcal{P}}
\newcommand {\suppress}[1]{}
\newcommand {\set} [1] {\ensuremath{ \left\lbrace #1 \right\rbrace }}
\def\ve{{\varepsilon}}
\def\F{\mathrm{F}}
\newcommand {\reg} [1] {\ensuremath{ \mathnormal{#1} }}
\def\E{\mathcal{E}}
\newcommand {\mytitle} {A parallel repetition theorem for entangled two-player one-round games
	under product distributions}
\newcommand {\Rahul}   {Rahul Jain}
\newcommand {\Attila}  {Attila Pereszl\'{e}nyi}
\newcommand {\Penghui} {Penghui Yao}
\newcommand {\CQT} {Centre for Quantum Technologies}
\newcommand {\CQTCS} {\CQT{} and\\ Department of Computer Science}
\newcommand {\NUS} {National University of Singapore}
\newcommand {\BigCalculation} {
\begin{figure*}[!b]
% ensure that we have normalsize text
\normalsize
% The spacer can be tweaked to stop underfull vboxes.
\vspace*{13pt}
% IEEE uses as a separator
\hrulefill
\begin{align}
	& \onenorm{\expec{(x,y)\leftarrow \mu}{\ketbra{xy}\otimes\ketbra{\varphi_{x,y}}}-
		\expec{(x,y)\leftarrow \mu_X\otimes\mu_Y}{\ketbra{xy}\otimes\br{\unitary{U}_x \otimes \unitary{V}_y}
		\ketbra{\varphi}
		\br{\adjoint{\unitary{U}_x} \otimes  \adjoint{\unitary{V}_y}}}} \nonumber \\
	& \quad \leq \onenorm{\expec{(x,y)\leftarrow \mu}{\ketbra{xy}\otimes\ketbra{\varphi_{x,y}}}
		- \expec{(x,y)\leftarrow \mu_X \otimes \mu_Y}
		{\ketbra{xy}\otimes\br{\unitary{U}_x \otimes \id_{\cspace{B}}}
		\ketbra{\varphi_y}
		\br{\adjoint{\unitary{U}_x} \otimes \id_{\cspace{B}}}}} \nonumber \\
	& \quad \quad {} + \onenorm{\expec{(x,y)\leftarrow \mu_X \otimes \mu_Y}
		{\ketbra{xy}\otimes\br{\unitary{U}_x \otimes \id_{\cspace{B}}}
		\ketbra{\varphi_y}
		\br{\adjoint{\unitary{U}_x} \otimes \id_{\cspace{B}}}
		- \ketbra{xy}\otimes\br{\unitary{U}_x \otimes \unitary{V}_y}
		\ketbra{\varphi}
		\br{\adjoint{\unitary{U}_x} \otimes \adjoint{\unitary{V}_y}}}}
		\label{eqn:triangle inequality} \\
	& \quad \leq \onenorm{\expec{x\leftarrow\mu_X}{\ketbra{x}\otimes\ketbra{\varphi_{x}} - \ketbra{x}\otimes
		\br{\unitary{U}_x \otimes \id_{\cspace{B}}}
		\ketbra{\varphi}
		\br{\adjoint{\unitary{U}_x} \otimes \id_{\cspace{B}}}}} \nonumber \\
	& \quad \quad {} +
		\onenorm{\expec{xy\leftarrow\mu_X\otimes\mu_Y} { \ketbra{xy} \otimes \ketbra{\varphi_y} - \ketbra{xy} \otimes
		\br{\id_{\cspace{A}} \otimes \unitary{V}_y}
		\ketbra{\varphi}
		\br{\id_{\cspace{A}} \otimes\adjoint{\unitary{V}_y}}}}
		\label{eqn:Ux} \\
	& \quad = \expec{x\leftarrow\mu_X}{
		\onenorm{\ketbra{\varphi_{x}} -
		\br{\unitary{U}_x \otimes \id_{\cspace{B}}}
		\ketbra{\varphi}
		\br{\adjoint{\unitary{U}_x} \otimes \id_{\cspace{B}}}}} \nonumber \\
	& \quad \quad {} + \expec{y\leftarrow\mu_Y}{
		\onenorm{\ketbra{\varphi_y} -
		\br{\id_{\cspace{A}} \otimes \unitary{V}_y}
		\ketbra{\varphi}
		\br{\id_{\cspace{A}} \otimes\adjoint{\unitary{V}_y}}}} \nonumber \\
	& \quad \leq 8 \sqrt{\ve} \label{eqn:intermediate bound}
\end{align}
\end{figure*}
}
\begin{document}

\title{\mytitle}

\author{\IEEEauthorblockN{\Rahul}
\IEEEauthorblockA{\CQTCS\\
	\NUS\\
	Singapore\\
	\email{rahul@comp.nus.edu.sg}}
\and
\IEEEauthorblockN{\Attila}
\IEEEauthorblockA{\CQT\\
	\NUS\\
	Singapore\\
	\email{attila.pereszlenyi@gmail.com}}
\and
\IEEEauthorblockN{\Penghui}
\IEEEauthorblockA{\CQT\\
	\NUS\\
	Singapore\\
	\email{phyao1985@gmail.com}}
}

\maketitle

\begin{abstract}
We show a {\em parallel repetition} theorem for the {\em entangled} value
$\omega^*(G)$ of any {\em two-player one-round game} $G$ where the questions
$(x,y) \in \finset{X}\times\finset{Y}$ to Alice and Bob are drawn from a
product distribution on $\finset{X}\times\finset{Y}$.
We show that for the $k$-fold product $G^k$ of the game $G$
(which represents the game $G$ played in parallel $k$ times independently)
\[ \fn{\omega^*}{G^k} = \br{1-(1-\omega^*(G))^3}^{\bigomega{\frac{k}
{\log(|\finset{A}| \cdot |\finset{B}|)}}} \]
 where $\finset{A}$ and $\finset{B}$ represent the sets from which the answers of Alice and Bob are drawn.

The arguments we use are information theoretic and are broadly on similar lines as that of  Raz~\cite{Raz:1995:PRT:225058.225181} and Holenstein~\cite{Holenstein2007} for classical games.
The additional quantum ingredients we need, to deal with entangled games, are inspired by the work
of Jain, Radhakrishnan, and Sen~\cite{Jain:2008}, where quantum information theoretic arguments were
used to achieve message compression in quantum communication protocols.
\end{abstract}

\begin{IEEEkeywords}
parallel repetition theorem; two-player game; entangled value
\end{IEEEkeywords}

\section{Introduction}

A {\em two-player one-round game} $G$ is specified by finite sets \finset{X}, \finset{Y}, \finset{A}, and \finset{B}, a distribution $\mu$ over $\finset{X}\times\finset{Y}$, and a predicate $V:\finset{X}\times \finset{Y}\times\finset{A}\times\finset{B}\rightarrow\set{0,1}$. It is played as follows.

\begin{itemize}

\item The referee selects questions $(x,y)\in\finset{X}\times\finset{Y}$ according to distribution $\mu$.

\item The referee sends $x$ to Alice and $y$ to Bob. Alice and Bob are spatially separated, so they do not see each other's input.

\item Alice chooses answer $a\in\finset{A}$ and sends it back to the referee. Bob chooses answer $b\in\finset{B}$ and sends it back to the referee.

\item The referee accepts if $V(x,y,a,b)=1$ and otherwise rejects. Alice and Bob win the game if the referee accepts.

\end{itemize}
The {\it value} of the game $G$, denoted by $\omega(G)$, is defined to be the maximum winning probability (averaged over the distribution $\mu$) achieved by Alice and Bob.

These games have played an important and pivotal role in the study of the rich theory of {\em inapproximability}, leading to the development of {\em Probabilistically Checkable Proofs} and the famous {\em Unique Games Conjecture}.  One of the most fundamental problems regarding this model is the so called
{\em parallel repetition} question, which concerns  the behavior of multiple copies of the game played in parallel. For the  game $G=(\mu,\finset{X}, \finset{Y}, \finset{A}, \finset{B},V)$, its  $k$-fold product is given by $G^k=(\mu^k,\finset{X}^k, \finset{Y}^k, \finset{A}^k, \finset{B}^k,V^k)$, where $V^k(x,y,a,b)=1$ if and only if $V(x_i,y_i,a_i,b_i)=1$ for all $i\in[k]$. Namely, Alice and Bob play $k$ copies of game $G$ simultaneously, and they win  iff they win all the copies. It is easily seen that $\omega(G^k)\geq\omega(G)^k$ for any game $G$.
The equality of the two quantities, for all games, was conjectured by Ben-Or, Goldwasser,
Kilian and Wigderson~\cite{Ben-Or:1988:MIP:62212.62223}. The conjecture was shown to be false by Fortnow~\cite{Fortnow:1989}.

However one could still expect that $\omega(G^k)$ goes down exponentially in $k$ (asymptotically).
This is referred to as the parallel repetition (also known as the {\em direct product}) conjecture.
This was shown to be indeed true in a seminal paper by Raz~\cite{Raz:1995:PRT:225058.225181}.
Raz showed that
\[ \fn{\omega}{G^k} = (1-(1-\omega(G))^c)^{\bigomega{\frac{k}{\log(|\finset{A}||\finset{B}|)}}} \]
where $c$ is a universal constant. This result, along with the the {\em PCP theorem} had deep consequences for the theory of inapproximability~\cite{Arora:1998:PCP:273865.273901,Arora:1998:PVH:278298.278306,Dinur:2007:PTG:1236457.1236459}. A series of works later exhibited improved results for general and specific games~\cite{Holenstein2007,Rao2008,Raz2008,Barak:2009:SPR:1616497.1616526,Raz:2012}.

In the quantum setting, it is natural to consider the so called {\em entangled games}
where Alice and Bob are, in addition, allowed to share a quantum state before the games starts. The questions and answers in the game remain classical.
On receiving questions, Alice and Bob can generate their answers by making quantum measurements on their shared entangled state. The value of the entangled version of the game $G$ is denoted by $\omega^*(G)$. The study of  entangled games is deeply related to the foundation of quantum mechanics and that of quantum entanglement.
These games have been used to give a novel interpretation to {\it Bell inequalities},
one of the most famous and useful methods for differentiating classical and quantum mechanics
(e.g., by Clauser, Horne, Shimony and Holt~\cite{CHSH:69}).
Recently these games have also been studied from cryptographic motivations
such as in Refs.~\cite{Esther:2010,TomamichelFKW13,MasanesPA11}.
Analogously to the classical case, the study of the parallel repetition question in this setting may potentially have applications in quantum complexity theory.

The parallel repetition conjecture has been shown to be true for several sub-classes of entangled games, starting with the so called {\em XOR games} by Cleve, Slofstra, Unger and Upadhyay~\cite{Cleve:2008:PPR:1391349.1391350}, later generalized to {\em unique games} by Kempe, Regev and Toner~\cite{Kempe2010} and very recently further generalized to {\em projection games} by Dinur, Steurer and Vidick~\cite{DinurSteurerVidick:2013} (following an analytical framework introduced by Dinur and Steurer in~\cite{DinurSteurer:2013} to deal with classical projection games).
For general games, Kempe and Vidick~\cite{Kempe:2011:PRE:1993636.1993684} (following a framework by Feige and Killian~\cite{Feige:2000:TPE:586850.587040} for classical games)
showed a parallel repetition theorem albeit with only a polynomial decay in $k$, in the value $\omega^*(G^k)$.
In a recent work, Chailloux and Scarpa~\cite{Chailloux:2013}
showed an exponential decay in $\omega^*(G^k)$ using information theoretic arguments.
\begin{theorem}[\cite{Chailloux:2013}]
	\label{thm:chailloux}
For any game $G=(\mu,\finset{X},\finset{Y},\finset{A},\finset{B},V)$, where $\mu$ is the uniform distribution on $\finset{X} \times \finset{Y}$, it holds that
\[
	\fn{\omega^*}{G^k} = \br{1 - (1-\omega^*(G))^2}^{\bigomega{\frac{k}
	{\log(|\finset{A}||\finset{B}||\finset{X}||\finset{Y}|)}}}.
\]
	As a corollary, for a general distribution $\mu$,
\[
	\fn{\omega^*}{G^k} = \br{1 - (1-\omega^*(G))^2}^{\bigomega{\frac{k}
	{Q^4 \log (Q) \log(|\finset{A}||\finset{B}|)}}}
\]
	where
\[
	Q = \max \cbr{ \ceil{ \frac{1}{\min_{x,y : \mu(x,y) \neq 0}
	\cbr{\sqrt{\mu(x,y)}}}}, |\finset{X}|\cdot|\finset{Y}|} .
\]
\end{theorem}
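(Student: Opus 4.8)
The plan is to follow the information-theoretic template of Raz and Holenstein, adapting the classical ``reconstruction'' step to entangled strategies via quantum information arguments in the spirit of Jain, Radhakrishnan, and Sen. Fix an optimal entangled strategy for $G^k$: a shared state $\ket{\psi}$ on $\cspace{A} \otimes \cspace{B}$ together with measurement operators that, on inputs $x \in \finset{X}^k$ and $y \in \finset{Y}^k$, produce answers in $\finset{A}^k \times \finset{B}^k$. Let $W_i$ denote the event that coordinate $i$ is won and $W_C = \bigcap_{j \in C} W_j$ for a set $C \subseteq [k]$, so that $\prob{\bigcap_{i \in [k]} W_i} = \fn{\omega^*}{G^k}$. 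The goal is to prove that winning forces $\omega^*(G)$ close to $1$ at a controlled rate: concretely, for a random $C$ of suitable size and a random fresh coordinate $i \notin C$, I would bound the conditional winning probability $\prob{W_i \middle| W_C}$ by $\omega^*(G)$ plus a small additive term, and then iterate this over a geometrically shrinking set of uncondtioned coordinates to obtain the stated exponential decay.

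The core reduction is to manufacture, out of the $k$-fold strategy conditioned on $W_C$, an honest single-copy strategy on coordinate $i$ that wins with probability at least $\prob{W_i \middle| W_C} - \delta$. I would split this into two pieces. First, an information-theoretic piece: because $\mu$ is uniform (hence product across the two players), conditioning on $W_C$ leaves the fresh-coordinate input pair $(X_i, Y_i)$ close to $\mu$ and close to a product of its marginals, on average over $i$; this follows from a chain-rule/superadditivity bound of the form $\sum_i \condmutinf{X_i}{Y_i}{W_C} = \bigo{\log \br{1 / \fn{\omega^*}{G^k}}}$, since conditioning on an event of probability $\fn{\omega^*}{G^k}$ can ``cost'' only that many bits, spread over $k$ coordinates. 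Second, a quantum piece: writing $\ket{\varphi_{x,y}}$ for the post-conditioning global state when the fresh-coordinate inputs are fixed to $(x,y)$, I need Alice and Bob to locally prepare a state close to $\ket{\varphi_{x,y}}$ using only their own halves of the input.

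The heart of the argument, and the main obstacle, is this last step, because quantum conditioning cannot be simulated by rejection sampling on shared randomness as in the classical proof: physically measuring the shared state to test $W_C$ would irreversibly disturb it. Instead I would show that there exist local unitaries $\unitary{U}_x$ on $\cspace{A}$ and $\unitary{V}_y$ on $\cspace{B}$, together with a single fixed state $\ket{\varphi}$, such that on average over $(x,y) \leftarrow \mu$,
\[ \ketbra{\varphi_{x,y}} \approx \br{\unitary{U}_x \otimes \unitary{V}_y} \ketbra{\varphi} \br{\adjoint{\unitary{U}_x} \otimes \adjoint{\unitary{V}_y}} \]
in trace distance. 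The strategy is to interpolate through the half-conditioned states $\ket{\varphi_x}$ and $\ket{\varphi_y}$, bound each single-sided correction by a quantum-informational quantity (a relative-entropy or fidelity term controlled by the small mutual informations above), and assemble the pieces by the triangle inequality, converting closeness of marginals into closeness of purifications by a quantum substate / Uhlmann-type argument. This yields an average error of $\bigo{\sqrt{\ve}}$ where $\ve$ is the per-coordinate information cost; the square root is precisely what degrades the generic exponent and produces the exponent $2$ in $(1 - \omega^*(G))^2$, and the need to select the unitaries from the inputs is what introduces the extra $\log \br{ |\finset{X}| \cdot |\finset{Y}| }$ in the denominator. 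Once the reconstruction holds, the players apply $\unitary{U}_x, \unitary{V}_y$ to a fresh copy of $\ket{\varphi}$, run the embedded coordinate-$i$ measurement, and win the single game with the claimed probability.

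Finally, for the corollary I would reduce a general $\mu$ to the uniform case by an embedding: split each question pair $(x,y)$ into roughly $\mu(x,y) \cdot M$ symbols for a common denominator $M$, so that the enlarged game carries an (approximately) uniform distribution while preserving the entangled value, the key quantum subtlety being that the splitting must be done at the granularity of the \emph{amplitudes} $\sqrt{\mu(x,y)}$ rather than the probabilities, which is the source of the term $\ceil{1 / \min_{x,y} \cbr{\sqrt{\mu(x,y)}}}$ in $Q$. Applying the uniform-case bound to the embedded game and translating back accounts for the additional $Q^4 \log Q$ factor. The points needing care are that the embedding exactly preserves $\omega^*$ and that the blow-up of the question sets is bounded by $Q$ as defined.
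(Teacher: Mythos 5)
First, a point of order: the paper does not prove \cref{thm:chailloux} at all --- it is imported verbatim from Chailloux and Scarpa~\cite{Chailloux:2013} as background, and the paper's own proof effort goes entirely into the distinct product-distribution result, \cref{thm:main}. So there is no in-paper proof to compare you against; I can only measure your sketch against the closely analogous argument the paper does give (for \cref{thm:main}) and against what the quoted statement itself forces.

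Measured that way, your template is the right one --- Raz/Holenstein conditioning plus the Jain--Radhakrishnan--Sen reconstruction; your two-sided claim that $\ketbra{\varphi_{x,y}}$ is close to $\br{\unitary{U}_x\otimes\unitary{V}_y}\ketbra{\varphi}\br{\adjoint{\unitary{U}_x}\otimes\adjoint{\unitary{V}_y}}$ is exactly \cref{lem:two sided transformation} --- but there are concrete gaps. (1) Your information budget is wrong as stated: you charge only $\log\br{1/\fn{\omega^*}{G^k}}$ for conditioning on $W_C$, but conditioning on success in $\finset{C}$ requires fixing the answer registers $A_{\finset{C}}B_{\finset{C}}$ (and, in the Chailloux--Scarpa argument, the question registers on $\finset{C}$ as well), which costs an additional $\abs{\finset{C}}$ times the logarithm of the relevant alphabet sizes; this is precisely the content of \cref{lem:lowrelent}, and it is the \emph{only} place the alphabet sizes in the exponent's denominator can enter. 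Relatedly, your attribution of the $\log\br{\abs{\finset{X}}\cdot\abs{\finset{Y}}}$ factor to ``the need to select the unitaries from the inputs'' cannot be right: in the present paper the unitaries are likewise indexed by the inputs and no such cost appears; that factor has to come from the entropy of the conditioned question registers, which is exactly what the product-distribution argument of \cref{thm:main} manages to avoid paying. (2) The corollary is asserted, not proved. You posit an embedding splitting each pair $(x,y)$ into roughly $\mu(x,y)\cdot M$ symbols and then reverse-engineer the $\sqrt{\mu(x,y)}$ in the definition of $Q$ as ``granularity of the amplitudes,'' but you give no argument that the embedding preserves $\omega^*$, no treatment of the rounding error when $\mu(x,y)\cdot M$ is not an integer, and no derivation of the $Q^4\log Q$ loss; as written this part describes the shape of the answer rather than deriving it. These defects are fixable in principle (they are fixed in~\cite{Chailloux:2013}), but as submitted the proposal does not establish the theorem.
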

Note that here $\omega^*(G^k)$ depends on $|\finset{X}| \cdot |\finset{Y}|$ as well, in addition to $|\finset{A}| \cdot |\finset{B}|$ (as in Raz's result). Also the value of $Q$ can be arbitrarily large, depending on the distribution $\mu$.
\subsection*{Our result}
In this paper we consider the case when the distribution $\mu$
is product across $\finset{X} \times \finset{Y}$.
That is, there are distributions $\mu_X, \mu_Y$ on $\finset{X}, \finset{Y}$ respectively such that $\forall (x,y) \in \finset{X} \times \finset{Y}: \mu(x,y) = \mu_X(x) \cdot \mu_Y(y)$.
We show the following.
\begin{theorem}[Main Result]
	\label{thm:main}
For any game \[G=(\mu,\finset{X},\finset{Y},\finset{A},\finset{B},V)\]
where $\mu$ is a product distribution on $\finset{X} \times \finset{Y}$, it holds that
\[ \fn{\omega^*}{G^k} = \br{1-(1-\omega^*(G))^3}^{\bigomega{\frac{k}
{\log(|\finset{A}||\finset{B}|)}}} .\]
\end{theorem}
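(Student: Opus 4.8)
The plan is to run the information-theoretic parallel-repetition argument of Raz and Holenstein, replacing its classical ``correlated sampling'' step by a quantum state-reconstruction step in the spirit of Jain, Radhakrishnan, and Sen. Fix an optimal entangled strategy for $G^k$, consisting of a shared state together with local measurements for each pair of question tuples, and write $v \defeq \omega^*(G^k)$ and $\eta \defeq 1 - \omega^*(G)$. For a set $C \subseteq [k]$ of coordinates let $W_C$ denote the event that all coordinates in $C$ are won. The goal is to show that if $v$ exceeds the claimed bound, then conditioning on $W_C$ for a suitable random $C$ produces, on a suitable coordinate $T \notin C$, a bona fide single-game strategy whose value strictly exceeds $\omega^*(G)$ -- contradicting the definition of $\omega^*(G)$ and forcing the stated bound on $v$.

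First I would set up the conditional state. After conditioning on $W_C$ and on the questions and answers in the coordinates of $C$, the players hold a state on registers carrying $X_T$, $Y_T$, and their two shares of the entanglement. A priori $X_T$ and $Y_T$ are independent because $\mu = \mu_X \otimes \mu_Y$; conditioning on $W_C$ introduces correlations, but only mildly. Using the chain rule for (quantum) mutual information and relative entropy, I would bound, averaged over a random $T$, the relevant divergence by a quantity that is small once $-\log v$ is small compared with $k$; it is the revealed answers, carrying $\bigo{\log(\abs{\finset{A}}\abs{\finset{B}})}$ bits per coordinate, that produce the $\log(\abs{\finset{A}}\abs{\finset{B}})$ factor in the final exponent. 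Choosing $v$ at the threshold $\br{1-\eta^3}^{\bigomega{k/\log(\abs{\finset{A}}\abs{\finset{B}})}}$ makes this averaged divergence parameter $\ve$ small, of smaller order than $\eta^2$.

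The heart of the argument is the local reconstruction step, and it is exactly here that the product structure of $\mu$ is used. I would show that there is a single state $\ket{\varphi}$ and families of local unitaries $\cbr{\unitary{U}_x}$ and $\cbr{\unitary{V}_y}$, with $\unitary{U}_x$ depending only on Alice's question and $\unitary{V}_y$ only on Bob's, such that $\br{\unitary{U}_x \otimes \unitary{V}_y} \ketbra{\varphi} \br{\adjoint{\unitary{U}_x} \otimes \adjoint{\unitary{V}_y}}$ approximates the true conditional state $\ketbra{\varphi_{x,y}}$ in trace distance. The triangle-inequality split passes through the intermediate state $\varphi_y$ and reduces the task to an Alice-only correction, comparing the Alice-conditional state $\varphi_x$ with $\unitary{U}_x$ applied to $\varphi$, and a symmetric Bob-only correction comparing $\varphi_y$ with $\unitary{V}_y$ applied to $\varphi$; each is controlled by a one-shot, substate-style conversion of small relative entropy into a local unitary, and the two combine to the bound $\onenorm{\cdots} \leq 8\sqrt{\ve}$. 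Because $x$ and $y$ are independent under $\mu_X \otimes \mu_Y$, these two corrections decouple and no communication between the players is needed: Alice applies $\unitary{U}_x$ and Bob applies $\unitary{V}_y$ to the pre-shared $\ket{\varphi}$. I expect this to be the main obstacle: turning a small information-theoretic divergence into genuine local unitaries that prepare an entangled state with no communication is precisely the quantum difficulty that classical correlated sampling avoids, and it is the product-distribution assumption that keeps Alice's and Bob's corrections independent and thereby makes the two-term split -- and the final $8\sqrt{\ve}$ bound -- possible; removing that assumption is exactly what this paper does not attempt.

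With the reconstruction in hand I would assemble the single-game strategy: on inputs $x \leftarrow \mu_X$ and $y \leftarrow \mu_Y$ the players pre-share $\ket{\varphi}$, apply $\unitary{U}_x$ and $\unitary{V}_y$ respectively, and run the original coordinate-$T$ measurement to produce their answers. Its value is at least the conditional probability of winning $T$ given $W_C$ minus the reconstruction error $8\sqrt{\ve}$. The final step is an averaging argument over the choice of $C$ and $T$: the coordinates on which the conditional win probability is low are few (else $v$ would already lie below the threshold), and the coordinates on which $\ve$ is large are few (since the total divergence is bounded). Hence some coordinate $T$ is simultaneously good, the strategy above wins the single game with probability exceeding $\omega^*(G)$, and we obtain the desired contradiction. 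The cube in $\br{1-\eta^3}$ should emerge from balancing the two constraints here: the reconstruction forces $\ve$ to be of order $\eta^2$, and combining this with the $\eta$-sized margin by which the conditional win probability must beat $\omega^*(G)$ yields a per-round decay of order $\eta^3$.
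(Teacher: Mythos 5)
Your proposal follows essentially the same route as the paper: condition on winning a set $\finset{C}$, use the chain rule to bound the resulting divergence by $-\log v + \abs{\finset{C}}\log(\abs{\finset{A}}\abs{\finset{B}})$, locate a good coordinate by Markov, and invoke the two-sided local-unitary reconstruction (the $8\sqrt{\ve}$ lemma, proved exactly by the triangle-inequality split through the intermediate state $\varphi_y$ and two applications of Uhlmann's theorem) to extract a single-game strategy contradicting $\omega^*(G)$, with the cube arising from the same $\ve \sim \eta^2$ versus $\eta$-margin trade-off. The only cosmetic difference is that you average over a random $\finset{C}$ while the paper builds $\finset{C}=\set{i_1,\dots,i_r}$ greedily one coordinate at a time, conditioning each new coordinate on the shared register $R_j$ (past questions and the answers in $\finset{C}$) as public coins; this does not change the substance of the argument.
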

Note that the uniform distribution on $\finset{X} \times \finset{Y}$
is a product distribution and our result has no dependence on the size
of $\finset{X} \times \finset{Y}$.
Hence, our result implies and strengthens on the result of
Chailloux and Scarpa~\cite{Chailloux:2013} (up to the exponent of $1-\omega^*(G)$).

\subsection*{Our techniques}

The arguments we use are information theoretic and are broadly on similar lines as that of  Raz~\cite{Raz:1995:PRT:225058.225181} and Holenstein~\cite{Holenstein2007} for classical games.
The additional quantum ingredients we need, to deal with entangled games, are inspired by the work
of Jain, Radhakrishnan, and Sen~\cite{Jain:2008}, where quantum information theoretic arguments were
used to achieve message compression in quantum communication protocols.

Given the $k$-fold game $G^k$,
let us condition on success on a set $\finset{C}\subseteq[k]$ of coordinates.
If the overall success in coordinates in \finset{C} is already as small as we want,
then we are done.
Otherwise, we exhibit another coordinate $j\notin \finset{C}$
such that the success in the $j$-th coordinate,
even when conditioning on success in the coordinates inside \finset{C},
is bounded away from $1$.
Here we assume that $\omega^*(G)$ is bounded away from $1$.
This way the overall success keeps going down and becomes exponentially small in $k$, after we have identified $\Omega(k)$ such coordinates.
To argue that  the probability with which Alice and Bob win the $j$-th coordinate,
conditioned on success in \finset{C}, is bounded away from $1$,
we show that close to this success probability can be achieved for a single instance of the game $G$.
That is, given inputs $(x',y')$, drawn from $\mu$, for a single instance of $G$,
Alice and Bob can embed $(x',y')$ to the $j$-th coordinate of $G^k$, conditioned on success in \finset{C},
and generate the rest of the state with good approximation.
So, if the probability of success in the $j$-th coordinate, conditioned on success in \finset{C},
is very close to $1$, there is a strategy for $G$ with probability of success strictly larger
than $\omega^*(G)$, reaching a contradicting to the definition of $\omega^*(G)$.

Suppose the global state, conditioned on success in \finset{C}, is of the form
\[ \sigma^{XYAB} = \sum_{x \in \finset{X}^k, y \in \finset{Y}^k}
\tilde{\mu}(x,y) \ketbra{xy}^{\register{X}\register{Y}}
\otimes \ketbra{\phi_{xy}}^{\register{A}\register{B}} \]
where $\tilde{\mu}$ is a distribution, potentially different from
$\mu$ because of the conditioning on success.
(Here we further fix the questions and answers in \finset{C} to specific values  and do not specify them in $\sigma^{XYAB}$.)
In protocol $\Prot$ for the single instance of $G$,
we let Alice and Bob start with the shared pure state
\[ \ket{\finalst} = \sum_{x \in \finset{X}^k, y \in \finset{Y}^k} \sqrt{\tilde{\mu}(x,y)}
\ket{xxyy}^{\xtildereg\register{X}\ytildereg\register{Y}} \otimes
\ket{\phi_{xy}}^{\register{A}\register{B}} . \]
Note that \ket{\finalst} is a purification of $\sigma^{XYAB}$, where registers
$\xtildereg$ and $\ytildereg$ are identical to \register{X} and \register{Y}. We introduce these copies of the registers \register{X} and \register{Y} so that the marginal state
in these registers remains a {\em classical state} and these registers can be viewed as classical registers, which is important in our arguments.

\suppress{The reason we want the global state, including the questions, answers, and
the shared state, to be pure is because we can then use properties of
purifications such as the unitary equivalence of purifications
and Uhlmann's theorem.
This trick has been used before in other papers \cite{Jain:2003:LBB:946243.946331,Jain:2008}.
It has also been used recently by Chailloux and Scarpa~\cite{Chailloux:2013}.}

Using the chain rule for mutual information, we are able to argue that
both $\mutinf{X_j}{Y\tilde{Y}B}$ and $\mutinf{Y_j}{X\tilde{X}A}$
are very small (close to $0$), in $\ket{\finalst}$.
This, obviously, is only possible when the distribution
$\mu$ is product.
In addition, the distribution of the questions in the $j$-th coordinate,
in $\ket{\finalst}$, remains close to $\mu$, in the $\ell_1$-{\em distance}.
In protocol $\Prot$, when Alice and Bob get questions $x'$ and $y'$,
suppose they measure registers $\register{X}_j$ and $\register{Y}_j$,
in \ket{\finalst}, and get $x'_j$ and $y'_j$.
Let \ket{\finalst_{x'_jy'_j}} be the resulting state.
If by luck it so happens that $\br{x',y'} = \br{x'_j,y'_j}$,
then they can measure the answer registers $\register{A}_j$ and $\register{B}_j$,
in \ket{\finalst_{x'_jy'_j}}, respectively, and send the answers to the referee.
However, the probability that $(x',y') = \br{x'_j,y'_j}$ can be very small and they
want to get this desired outcome with probability very close to $1$.
We describe next how this can be achieved.

Let \ket{\finalst_{x'_j}} be the resulting state obtained after we measure register
$\register{X}_j$ (in $|\finalst\rangle$) and obtain outcome $x'_j$.
The fact that $\mutinf{X_j}{Y\tilde{Y}B}$ is close to $0$
implies that Bob's side of \ket{\finalst_{x'_j}} is mostly independent of $x'_j$.
By the unitary equivalence of purifications and Uhlmann's theorem,
there is a unitary transformation $\unitary{U}_{x_j'}$
that Alice can apply to take the state  $\ket{\finalst}$ quite close
to the state $\ket{\finalst_{x_j'}}$.
Similarly, let us define \ket{\finalst_{y'_j}} and again
$\mutinf{Y_j}{X\tilde{X}A}$ being close to $0$ implies that Alice's side of
\ket{\finalst_{y'_j}} is mostly independent of $y'_j$.
Again, by Uhlmann's theorem, there is a unitary transformation $\unitary{U}_{y_j'}$
that Bob can apply to take the state $\ket{\finalst}$ quite close
to the state $\ket{\finalst_{y_j'}}$.
Interestingly, as was argued in~\cite{Jain:2008}, when Alice and Bob simultaneously
apply $\unitary{U}_{x_j'}$ and $\unitary{U}_{y_j'}$,
they take $\ket{\finalst}$ quite close to the state $\ket{\finalst_{x_j' y_j'}}$!
This again requires the distribution of questions to be
independent across Alice and Bob.

\subsection*{Organization of the paper}

In \cref{sec:Preliminaries}, we present some background on information
theory, as well as some useful lemmas that we will need for our proof.
In \cref{sec:main result}, we prove our main result, \cref{thm:main}.

\section{Preliminaries}
\label{sec:Preliminaries}

In this section we present some notations, definitions, facts,
and lemmas that we will use later in our proof.

\subsection*{Information theory}

For integer $n \geq 1$, let $[n]$ represent the set $\{1,2, \ldots, n\}$.
Let $\finset{X}$ and $\finset{Y}$ be finite sets and $k$ be a natural number.
Let $\finset{X}^k$ be the set $\finset{X}\times\cdots\times\finset{X}$, the cross product of
$\finset{X}$, $k$ times.
Let $\mu$ be a probability distribution on $\finset{X}$.
Let $\mu(x)$ represent the probability of $x\in\finset{X}$ according to $\mu$.
Let $X$ be a random variable distributed according to $\mu$.
We use the same symbol to represent a
random variable and its distribution whenever it is clear from the
context.
The expectation value of function $f$ on $\finset{X}$ is defined as
$\expec{x \leftarrow X}{f(x)} \defeq \sum_{x \in \finset{X}} \prob{X=x}
\cdot f(x)$, where $x\leftarrow X$ means that $x$ is drawn from the distribution of $X$.
A quantum state (or just a state) $\rho$ is a positive semi-definite matrix with trace equal to $1$. It is pure if and only if the rank is $1$. Let $\ket{\psi}$ be a unit vector.
With slight abuse of notation, we use $\psi$ to represent the state
and also the density matrix  $\ketbra{\psi}$, associated with $\ket{\psi}$.
A classical distribution $\mu$ can be viewed as a quantum state with diagonal entries $\mu(x)$ and non-diagonal entries $0$.
For two quantum states $\rho$ and $\sigma$, $\rho\otimes\sigma$ represents the tensor product (Kronecker product) of $\rho$ and $\sigma$.
A quantum super-operator $\E(\cdot)$ is a completely positive and trace preserving (CPTP) linear map from states to states.
Readers can refer to~\cite{CoverT91,NielsenC00,Watrouslecturenote} for more details.

\begin{definition}\label{def:tracedistance}
For quantum states $\rho$ and $\sigma$, the  $\ell_1$-distance between them is given by $\onenorm{\rho-\sigma}$, where $\onenorm{X}\defeq\Tr\sqrt{X^{\dag}X}$ is the sum of the singular values of $X$. We say that $\rho$ is $\varepsilon$-close to $\sigma$ if $\|\rho-\sigma\|_1\leq\varepsilon$.
\end{definition}
\begin{definition}\label{def:fidelity}
For quantum states $\rho$ and $\sigma$, the {\em fidelity} between them is given by $\F(\rho,\sigma)\defeq\onenorm{\sqrt{\rho}\sqrt{\sigma}}.$
\end{definition}

The following proposition states that the distance between
two states can't be increased by quantum operations.
\begin{prop}[\cite{NielsenC00}, pages 406 and 414]
	\label{prop:monotonequantumoperation}
For states $\rho$, $\sigma$, and quantum operation $\E(\cdot)$, it holds that
\begin{align*}
	\onenorm{\E(\rho) - \E(\sigma)} &\leq \onenorm{\rho - \sigma}
	\intertext{and}
	\F(\E(\rho),\E(\sigma)) &\geq \F(\rho,\sigma) .
\end{align*}
\end{prop}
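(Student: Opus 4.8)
The plan is to reduce both inequalities to the three elementary building blocks of an arbitrary quantum operation via the Stinespring dilation. Every CPTP map $\E$ on a system $\cspace{S}$ can be written as
\[ \fn{\E}{\rho} = \partrace{\cspace{E}}{\unitary{W}\br{\rho \otimes \ketbra{0}}\adjoint{\unitary{W}}} \]
for a fixed ancilla state $\ket{0}$ on an environment $\cspace{E}$ and a unitary $\unitary{W}$ on $\cspace{S}\cspace{E}$. Thus $\E$ factors as (i) adjoining the fixed pure state $\ketbra{0}$; (ii) conjugation by $\unitary{W}$; and (iii) the partial trace $\partrace{\cspace{E}}{\cdot}$. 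It therefore suffices to verify each inequality for these three operations separately and then compose them.

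Steps (i) and (ii) in fact preserve both quantities, so they contribute nothing to the inequalities. Tensoring with a rank-one projector leaves every singular value unchanged, giving $\onenorm{\br{\rho - \sigma}\otimes\ketbra{0}} = \onenorm{\rho - \sigma}$, and since $\sqrt{\rho \otimes \ketbra{0}} = \sqrt{\rho}\otimes\ketbra{0}$ also $\fn{\F}{\rho \otimes \ketbra{0}, \sigma\otimes\ketbra{0}} = \fn{\F}{\rho, \sigma}$. Conjugation by $\unitary{W}$ is a similarity that preserves singular values, and $\sqrt{\unitary{W}\rho\adjoint{\unitary{W}}} = \unitary{W}\sqrt{\rho}\adjoint{\unitary{W}}$, so both $\onenorm{\cdot}$ and $\F$ are invariant. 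Hence the entire content of the proposition lies in the behaviour under the partial trace (iii), which I would isolate as a lemma for generic bipartite states $\rho$, $\sigma$ on $\cspace{S}\cspace{E}$.

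For the trace-distance part I would use the variational characterization $\onenorm{\tau} = \max\cbr{\fn{\Tr}{\unitary{M}\tau} : -\id \leq \unitary{M} \leq \id}$, valid for Hermitian $\tau$ and obtained by diagonalizing $\tau$ and choosing $\unitary{M}$ to match the signs of its eigenvalues. Let $\unitary{M}$ be optimal for the Hermitian operator $\partrace{\cspace{E}}{\rho} - \partrace{\cspace{E}}{\sigma}$ on $\cspace{S}$. Since the partial trace is the adjoint (in the Hilbert--Schmidt inner product) of the map $\unitary{N} \mapsto \unitary{N}\otimes\id_{\cspace{E}}$, and $\id$-tensoring preserves operator order, the operator $\unitary{M}\otimes\id_{\cspace{E}}$ is feasible for $\rho - \sigma$ and attains the same value; hence $\onenorm{\partrace{\cspace{E}}{\rho} - \partrace{\cspace{E}}{\sigma}} \leq \onenorm{\rho - \sigma}$.

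For the fidelity part I would invoke Uhlmann's theorem, that $\fn{\F}{\alpha, \beta}$ equals the maximum of $\abs{\langle\psi_{\alpha} \vert \psi_{\beta}\rangle}$ over all pairs of purifications. Choosing optimal purifications $\ket{\psi_{\rho}}$, $\ket{\psi_{\sigma}}$ on $\cspace{S}\cspace{E}\cspace{R}$ with overlap $\fn{\F}{\rho,\sigma}$, these same vectors are purifications of $\partrace{\cspace{E}}{\rho}$ and $\partrace{\cspace{E}}{\sigma}$ once $\cspace{E}$ is absorbed into the purifying register $\cspace{E}\cspace{R}$. As they are only particular purifications of the marginals, $\fn{\F}{\partrace{\cspace{E}}{\rho}, \partrace{\cspace{E}}{\sigma}} \geq \abs{\langle\psi_{\rho} \vert \psi_{\sigma}\rangle} = \fn{\F}{\rho, \sigma}$. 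Composing the three steps then yields both stated inequalities. The main obstacle is precisely this partial-trace step: steps (i) and (ii) are immediate, and the real work is in establishing the variational formula for the trace norm and in correctly setting up the purification-superset argument underwritten by Uhlmann's theorem.
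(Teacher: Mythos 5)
Your proof is correct: the Stinespring decomposition into ancilla-adjoining, unitary conjugation, and partial trace is sound, the variational formula for the trace norm combined with the adjoint property of the partial trace handles the first inequality, and the purification-superset argument via Uhlmann's theorem handles the second. The paper offers no proof of its own here --- it simply cites Nielsen and Chuang --- and your argument is essentially the standard textbook one (the cited source proves fidelity monotonicity exactly via Uhlmann, and trace-distance monotonicity via the equivalent positive/negative-part decomposition of $\rho-\sigma$ rather than your variational formula), so there is nothing to flag.
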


The following proposition relates the $\ell_1$-distance and the fidelity between two states.

\begin{prop}[\cite{NielsenC00}, page 416]
	\label{prop:tracefidelityequi}
	For quantum states $\rho$ and $\sigma$, it holds that
	\[2(1-\F(\rho,\sigma))\leq\onenorm{\rho-\sigma}\leq2\sqrt{1-\F(\rho,\sigma)^2}.\]
For two pure states $\ket{\phi}$ and $\ket{\psi}$, we have
\begin{align*}
	\onenorm{\ketbra{\phi} - \ketbra{\psi}}
	&= \sqrt{1 - \fn{\F}{\ketbra{\phi}, \ketbra{\psi}}^2} \\
	&= \sqrt{1 - |\langle\phi|\psi\rangle|^2}.
\end{align*}
\end{prop}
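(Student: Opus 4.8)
The plan is to first nail down the exact pure-state identity, which is both the base case and the engine behind the general upper bound. For unit vectors $\ket{\phi}$ and $\ket{\psi}$, the operator $\ketbra{\phi}-\ketbra{\psi}$ is Hermitian, traceless, and supported on the at most two-dimensional space spanned by $\ket{\phi}$ and $\ket{\psi}$, so its nonzero eigenvalues are $\pm\lambda$ for a single $\lambda \ge 0$ and $\onenorm{\ketbra{\phi}-\ketbra{\psi}} = 2\lambda$. I would read off $\lambda$ from $\Tr\Br{\br{\ketbra{\phi}-\ketbra{\psi}}^2} = 2 - 2|\langle\phi|\psi\rangle|^2 = 2\lambda^2$, which pins down the pure-state trace distance as $2\sqrt{1-|\langle\phi|\psi\rangle|^2}$. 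Since $\fn{\F}{\ketbra{\phi},\ketbra{\psi}} = |\langle\phi|\psi\rangle|$ for pure states, this is precisely the general upper bound evaluated at equality, so pure states saturate it; this saturation is the structural fact I would lean on next.

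For the general upper bound $\onenorm{\rho-\sigma} \le 2\sqrt{1-\fn{\F}{\rho,\sigma}^2}$, I would purify and reduce to the pure case. By Uhlmann's theorem, pick purifications $\ket{\phi}$, $\ket{\psi}$ of $\rho$ and $\sigma$ on a common space with $|\langle\phi|\psi\rangle| = \fn{\F}{\rho,\sigma}$. As $\rho$ and $\sigma$ arise from $\ketbra{\phi}$ and $\ketbra{\psi}$ by tracing out the purifying system, which is a quantum operation, \cref{prop:monotonequantumoperation} gives $\onenorm{\rho-\sigma} \le \onenorm{\ketbra{\phi}-\ketbra{\psi}}$, and the pure-state identity rewrites the right-hand side as $2\sqrt{1-\fn{\F}{\rho,\sigma}^2}$.

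For the lower bound $2\br{1-\fn{\F}{\rho,\sigma}} \le \onenorm{\rho-\sigma}$, I would route through measurements. For a POVM $\cbr{E_m}$ write $p_m = \Tr\br{E_m\rho}$ and $q_m = \Tr\br{E_m\sigma}$; summing the scalar inequality $|p_m-q_m| = |\sqrt{p_m}-\sqrt{q_m}|\br{\sqrt{p_m}+\sqrt{q_m}} \ge \br{\sqrt{p_m}-\sqrt{q_m}}^2$ and using $\sum_m p_m = \sum_m q_m = 1$ yields the classical bound $1 - \sum_m\sqrt{p_mq_m} \le \frac{1}{2}\sum_m|p_m-q_m|$. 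I would then invoke two standard facts: the variational characterization $\fn{\F}{\rho,\sigma} = \min_{\cbr{E_m}}\sum_m\sqrt{p_mq_m}$ (a consequence of Uhlmann's theorem), and that $\frac{1}{2}\onenorm{\rho-\sigma}$ equals the maximum over measurements of $\frac{1}{2}\sum_m|p_m-q_m|$. Evaluating the classical bound at the fidelity-optimal measurement then gives $1-\fn{\F}{\rho,\sigma} \le \frac{1}{2}\onenorm{\rho-\sigma}$.

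The two diagonalizations and the scalar estimates are routine; the substance sits entirely in the two imported results, Uhlmann's theorem and the measurement characterization of fidelity it implies, which is where I expect any real difficulty to lie, and which is presumably why the statement is quoted from \cite{NielsenC00} rather than reproved in full here.
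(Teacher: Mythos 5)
Your proof is correct. The paper offers no proof of this proposition --- it is quoted from \cite{NielsenC00} --- so there is no in-paper argument to compare against; what you give is essentially the standard textbook derivation: the two-dimensional traceless-Hermitian eigenvalue computation for the pure-state case, Uhlmann's theorem combined with monotonicity of the trace norm under partial trace (\cref{prop:monotonequantumoperation}) for the upper bound, and the measurement (minimum over POVMs) characterization of fidelity for the lower bound. One point worth flagging explicitly: your computation yields $\onenorm{\ketbra{\phi}-\ketbra{\psi}} = 2\sqrt{1-|\langle\phi|\psi\rangle|^2}$, whereas the proposition as printed omits the factor of $2$ in the pure-state identity. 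That is a typo in the statement rather than an error on your side --- as you observe, pure states must \emph{saturate} the general upper bound $2\sqrt{1-\F(\rho,\sigma)^2}$, and the paper itself uses the version with the factor of $2$ when it invokes this proposition at \cref{eqn:trace distance pure} in the proof of \cref{lem:one sided transformation}. So your argument in fact corrects the displayed formula while confirming the form in which the proposition is actually applied.
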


Let $\rho^{AB}$ be a bipartite quantum state in registers $AB$. We use the same symbol to represent a quantum register and the Hilbert space associated with it.
We define
\[ \rho^{\reg{B}} \defeq \partrace{\reg{A}}{\rho^{AB}}
\defeq \sum_i (\bra{i} \otimes \id_{\cspace{B}})
\rho^{AB} (\ket{i} \otimes \id_{\cspace{B}}) \]
where $\set{\ket{i}}_i$ is a basis for the Hilbert space $\cspace{A}$
and $\id_{\cspace{B}}$ is the identity matrix in space $\cspace{B}$.
The state $\rho^B$ is referred to as the marginal state of $\rho^{AB}$ in register $B$.
\begin{definition}\label{def:purification}
	We say that a pure state $\ket{\psi} \in \cspace{A}\otimes\cspace{B}$
	is a purification of some state $\rho$
	if $\Tr_{\cspace{A}}(\ketbra{\psi})=\rho$.
\end{definition}

\begin{theorem}[Uhlmann's theorem]
	\label{thm:Uhlmann}
	Given quantum states $\rho$, $\sigma$, and a purification $\ket{\psi}$ of $\rho$,
	it holds that $\F(\rho,\sigma)=\max_{\ket{\phi}} | \langle \phi | \psi \rangle | $,
	where the maximum is taken over all purifications of $\sigma$.
\end{theorem}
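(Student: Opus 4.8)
The plan is to convert the maximization over purifications into a maximization over unitaries, and then to invoke the matrix identity $\max_{U}\abs{\Tr\Br{U X}} = \onenorm{X}$ with $X = \sqrt{\sigma}\sqrt{\rho}$. I would work with the standard dictionary between bipartite vectors and operators. After fixing orthonormal bases and writing $\ket{\Gamma} = \sum_i \ket{i}\ket{i}$ for the (unnormalized) maximally entangled vector on $\cspace{A} \otimes \cspace{B}$ with $\cspace{B}$ a copy of $\cspace{A}$, every vector $\ket{v}$ is of the form $\ket{v} = \br{M_v \otimes \id_{\cspace{B}}}\ket{\Gamma}$ for a unique operator $M_v$ on $\cspace{A}$, and a direct computation gives $\partrace{\cspace{B}}{\ketbra{v}} = M_v M_v^{\dagger}$ together with $\langle v \vert w \rangle = \Tr\Br{M_v^{\dagger} M_w}$. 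Under this dictionary the purifications of $\rho$ are exactly the $\ket{v}$ with $M_v M_v^{\dagger} = \rho$, and by polar decomposition these are precisely the $M_v = \sqrt{\rho}\,W$ as $W$ ranges over unitaries on $\cspace{A}$; the same holds for $\sigma$ with $\sqrt{\sigma}$. Restricting to a purifying register of the minimal size is harmless here: I would justify the reduction to this case using the standard isometric freedom in purifications, which bijects the purifications of a fixed state while preserving the modulus of overlaps, so that neither side of the claimed identity changes.

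With this setup, let the given purification $\ket{\psi}$ of $\rho$ correspond to $M_{\psi} = \sqrt{\rho}\,W_0$ for some fixed unitary $W_0$, and let $\ket{\phi} \leftrightarrow \sqrt{\sigma}\,V$ be an arbitrary purification of $\sigma$. Then, using the two dictionary formulas and cyclicity of the trace,
\[ \langle \phi \vert \psi \rangle = \Tr\Br{\br{\sqrt{\sigma}V}^{\dagger}\sqrt{\rho}\,W_0} = \Tr\Br{U\,\sqrt{\sigma}\sqrt{\rho}}, \qquad U \defeq W_0 V^{\dagger}. \]
As $V$ runs over all unitaries with $W_0$ held fixed, $U$ runs over all unitaries, so $\max_{\ket{\phi}}\abs{\langle\phi\vert\psi\rangle} = \max_{U}\abs{\Tr\Br{U\,\sqrt{\sigma}\sqrt{\rho}}}$. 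Applying the trace-norm identity and then $\onenorm{X} = \onenorm{X^{\dagger}}$ with $\br{\sqrt{\sigma}\sqrt{\rho}}^{\dagger} = \sqrt{\rho}\sqrt{\sigma}$, this equals $\onenorm{\sqrt{\sigma}\sqrt{\rho}} = \onenorm{\sqrt{\rho}\sqrt{\sigma}} = \F(\rho,\sigma)$, which is the desired statement.

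The only genuinely nontrivial ingredient is the identity $\max_{U}\abs{\Tr\Br{U X}} = \onenorm{X}$; everything else is bookkeeping with the vectorization dictionary and a polar decomposition. I expect the main care to be needed in two places: the achievability half of that identity, where one reads off the optimal $U$ from the singular value decomposition $X = \sum_k s_k \ketbratwo{r_k}{\ell_k}$ by taking $U = \sum_k \ketbratwo{\ell_k}{r_k}$ so that $\Tr\Br{U X} = \sum_k s_k = \onenorm{X}$ (the inequality $\abs{\Tr\Br{U X}} \leq \onenorm{X}$ for every unitary being immediate from the same decomposition); and the dimension bookkeeping that lets me assume the purifying register is a copy of $\cspace{A}$, which is where the isometric freedom in purifications enters.
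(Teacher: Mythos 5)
The paper states Uhlmann's theorem as an imported standard fact and gives no proof of it, so there is nothing internal to compare your argument against; judged on its own, your proof is the standard textbook argument and it is correct. The dictionary identities $\partrace{B}{\ketbra{v}} = M_v M_v^{\dagger}$ and $\langle v | w \rangle = \Tr\Br{M_v^{\dagger} M_w}$ are right, the polar-decomposition characterization of purifications as $\sqrt{\rho}\,W$ is right in the square case, the cyclicity step producing $U = W_0 V^{\dagger}$ is right, and your singular-value argument for $\max_U \abs{\Tr\Br{UX}} = \onenorm{X}$ is complete in both directions, so the chain terminates correctly at $\onenorm{\sqrt{\sigma}\sqrt{\rho}} = \onenorm{\sqrt{\rho}\sqrt{\sigma}} = \F(\rho,\sigma)$, matching the paper's \cref{def:fidelity}. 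The only loose end is the one you flag yourself: the reduction to a purifying register that is a copy of the system space. If the purifying register is larger, $M_{\psi}$ is rectangular and the polar decomposition yields an isometry rather than a unitary, and your appeal to the isometric freedom of purifications (which preserves overlaps and bijects purifications) is the standard and adequate way to dispose of that case; note only that in the opposite direction the theorem as literally stated presupposes the purifying register is large enough to admit a purification of $\sigma$ at all, an implicit hypothesis in every statement of this result. (A cosmetic point: the paper's \cref{def:purification} traces out the \emph{first} register, whereas you trace out the second; this is pure relabeling.) No genuine gap.
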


The {\em entropy} of a quantum state $\rho$ (in register $\reg{X}$) is defined as $\mathrm{S}(\rho) \defeq - \Tr\rho\log\rho.$
We also let $\mathrm{S}\br{\reg{X}}_{\rho}$ represent $\mathrm{S}(\rho)$.
The {\em relative entropy} between quantum states $\rho$ and $\sigma$ is defined as $\relent{\rho}{\sigma} \defeq \Tr\rho\log\rho-\Tr\rho\log\sigma.$
The {\em relative min-entropy} between them is defined as
$\rminent{\rho}{\sigma} \defeq \min\set{\lambda : \rho\leq2^\lambda\sigma}$.
Since the logarithm is operator-monotone, $\relent{\rho}{\sigma} \leq \rminent{\rho}{\sigma}$.
Let $\rho^{\reg{X}\reg{Y}}$ be a quantum state in space $\cspace{X}\otimes\cspace{Y}$. The {\em mutual information} between registers $\reg{X}$ and $\reg{Y}$ is defined to be
\begin{align*}
	\mutinf{\reg{X}}{\reg{Y}}_{\rho} &\defeq
	\mathrm{S}\br{\reg{X}}_{\rho}+\mathrm{S}\br{\reg{Y}}_{\rho}-\mathrm{S}\br{\reg{X}\reg{Y}}_{\rho}.
\end{align*}
It is easy to see that $\mutinf{\reg{X}}{\reg{Y}}_{\rho}
= \relent{\rho^{\reg{XY}}}{\rho^{\reg{X}} \otimes \rho^{\reg{Y}}}$.
If $\reg{X}$ is a classical register, namely
$\rho = \sum_x \mu(x) \ketbra{x} \otimes \rho_x$,
where $\mu$ is a probability distribution over $\reg{X}$, then
\begin{align*}
	\mutinf{\reg{X}}{\reg{Y}}_{\rho}
	&= \mathrm{S}\br{\reg{Y}}_{\rho}-\mathrm{S}\br{\reg{Y}|\reg{X}}_{\rho} \\
	&= \mathrm{S}\br{\sum_x\mu(x)\rho_x}-\sum_x\mu(x)\mathrm{S}\br{\rho_x}
\end{align*}
where the {\em conditional entropy} is defined as
\[\mathrm{S}(\reg{Y}|\reg{X})_{\rho}\defeq\expec{x\leftarrow\mu}{\mathrm{S}(\rho_x)}.\]
Let $\rho^{\reg{X}\reg{Y}\reg{Z}}$ be a quantum state with $\reg{Y}$ being a classical register.
The mutual information between $\reg{X}$ and $\reg{Z}$, conditioned on
$Y$, is defined as
\begin{align*}
	\condmutinf{\reg{X}}{\reg{Z}}{\reg{Y}}_{\rho} &\defeq
	\expec{y \leftarrow Y}{\condmutinf{\reg{X}}{\reg{Z}}{Y=y}_{\rho}} \\
	&= \mathrm{S}\br{\reg{X}|\reg{Y}}_{\rho} +
	\mathrm{S}\br{\reg{Z}|\reg{Y}}_{\rho} -
	\mathrm{S}\br{\reg{X}\reg{Z}|\reg{Y}}_{\rho} .
\end{align*}
The following {\em chain rule} for mutual information follows
easily from the definitions, when $\reg{Y}$ is a classical register.
\[ \mutinf{\reg{X}}{\reg{Y}\reg{Z}}_{\rho} = \mutinf{\reg{X}}{\reg{Y}}_{\rho} + \condmutinf{\reg{X}}{\reg{Z}}{\reg{Y}}_{\rho} .\]
We will need the following basic facts.
\begin{fact}
	\label{fact:relative entropy joint convexity}
	The relative entropy is
	jointly convex in its arguments.
	That is, for quantum states $\rho$,
	$\rho^1$, $\sigma$, and $\sigma^1$, and $p\in[0,1]$,
	\begin{align*}
		&\relent{p \rho + (1-p) \rho^1}{p \sigma + (1-p) \sigma^1} \\
		& \qquad \leq p \cdot \relent{\rho}{\sigma}
		+ (1-p) \cdot \relent{\rho^1}{\sigma^1} .
	\end{align*}
\end{fact}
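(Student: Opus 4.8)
The plan is to reduce the joint convexity to the \emph{monotonicity of the relative entropy under quantum operations} (the data processing inequality), which is the analogue for relative entropy of the contractivity statements in \cref{prop:monotonequantumoperation}. The reduction uses a standard block-diagonal embedding into a larger system followed by a partial trace, turning an \emph{equality} (additivity of relative entropy over orthogonal blocks) together with a single application of monotonicity into the desired inequality.

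Concretely, first I would introduce an auxiliary register $\reg{Z}$ with two-dimensional space $\cspace{Z}$ and orthonormal basis $\cbr{\ket{0}, \ket{1}}$, and form the two block-diagonal states
\[
	\hat{\rho} = p\, \rho \otimes \ketbra{0} + (1-p)\, \rho^1 \otimes \ketbra{1}, \qquad
	\hat{\sigma} = p\, \sigma \otimes \ketbra{0} + (1-p)\, \sigma^1 \otimes \ketbra{1}.
\]
Since $\hat{\rho}$ and $\hat{\sigma}$ are block-diagonal with respect to the orthogonal decomposition induced by $\cbr{\ket{0}, \ket{1}}$, the logarithm acts blockwise; using $\log(p\,\rho) = (\log p)\, \id + \log \rho$ on the relevant support, the scalar terms $\log p$ and $\log(1-p)$ cancel between $\Tr \hat{\rho} \log \hat{\rho}$ and $\Tr \hat{\rho} \log \hat{\sigma}$, and a short computation yields the exact additivity
\[
	\relent{\hat{\rho}}{\hat{\sigma}} = p \cdot \relent{\rho}{\sigma} + (1-p) \cdot \relent{\rho^1}{\sigma^1},
\]
which is precisely the right-hand side of the claim. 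On the other hand, tracing out the register $\reg{Z}$ is a quantum operation sending $\hat{\rho} \mapsto \partrace{\reg{Z}}{\hat{\rho}} = p\, \rho + (1-p)\, \rho^1$ and $\hat{\sigma} \mapsto \partrace{\reg{Z}}{\hat{\sigma}} = p\, \sigma + (1-p)\, \sigma^1$, so applying monotonicity of the relative entropy under this partial trace gives $\relent{p\rho + (1-p)\rho^1}{p\sigma + (1-p)\sigma^1} \leq \relent{\hat{\rho}}{\hat{\sigma}}$, and combining this with the additivity above finishes the proof. (If $\mathrm{supp}(\rho) \not\subseteq \mathrm{supp}(\sigma)$ or $\mathrm{supp}(\rho^1) \not\subseteq \mathrm{supp}(\sigma^1)$, the right-hand side is $+\infty$ and the inequality is trivial, so one may assume both relative entropies on the right are finite.)

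The one genuinely non-elementary input is the data processing inequality for the relative entropy. Unlike the contractivity of the $\ell_1$-distance, this is a deep fact (equivalent in strength to Lieb's concavity theorem), and since the excerpt records contractivity only for the $\ell_1$-distance and the fidelity, I would simply invoke it as a known result \cite{NielsenC00} rather than reprove it; this is the main obstacle to a fully self-contained argument. An alternative, more self-contained route would bypass monotonicity entirely and derive joint convexity directly from Lieb's concavity theorem, by differentiating the jointly concave map $(\rho, \sigma) \mapsto \Tr \rho^{t} \sigma^{1-t}$ at $t = 1$, but the embedding argument above is shorter and conceptually cleaner.
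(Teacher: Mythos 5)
Your argument is correct. The paper states this as a Fact without proof, so there is no in-paper derivation to compare against; but your reduction is the standard one, and it is worth noting that both of its ingredients already appear essentially verbatim elsewhere in the paper's preliminaries: the exact additivity of relative entropy over your block-diagonal decomposition is the special case $\mu^1=\mu=(p,1-p)$ of \cref{fact:relative entropy splitting}, and the monotonicity under tracing out the auxiliary register $\reg{Z}$ is precisely \cref{fact:subsystem monotone}. So your proof amounts to the observation that joint convexity is a formal consequence of those two stated facts. You are also right about where the depth lives: monotonicity of relative entropy under partial trace is the one genuinely nontrivial input (equivalent in strength to Lieb's concavity theorem), and since the paper also leaves \cref{fact:subsystem monotone} unproved, your argument does not reduce the overall logical burden so much as concentrate it in a single cited deep theorem --- which is the appropriate thing to do in a preliminaries section. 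The support caveat you add (the inequality is vacuous when the right-hand side is $+\infty$) is correct and needed for the statement to hold literally.
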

We have the following chain rule for the relative-entropy.
\begin{fact}
	\label{fact:relative entropy splitting}
	Let
	\begin{align*}
		\rho &= \sum_x\mu(x) \ketbra{x} \otimes\rho_x
		\intertext{and}
		\rho^1 &= \sum_x\mu^1(x) \ketbra{x} \otimes\rho^1_x .
	\end{align*}
	It holds that
	\[ \relent{\rho^1}{\rho} = \relent{\mu^1}{\mu}
	+ \expec{x\leftarrow \mu^1} {\relent{\rho^1_x}{\rho_x}}.\]
\end{fact}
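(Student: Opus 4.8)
The plan is to prove the identity by a direct computation, exploiting the fact that the classical register makes both $\rho$ and $\rho^1$ block diagonal, with mutually orthogonal blocks indexed by $x$. The crucial structural observation is that for a block-diagonal operator $\sum_x \mu(x) \ketbra{x} \otimes \rho_x$, the logarithm splits blockwise, $\log\rho = \sum_x \ketbra{x} \otimes \log\br{\mu(x)\rho_x}$, since distinct blocks have orthogonal support and the logarithm acts separately on each. Moreover, because $\mu(x)$ is a positive scalar, within each block $\log\br{\mu(x)\rho_x} = \br{\log\mu(x)}\id + \log\rho_x$, where $\id$ is the identity on the second register; this holds because scaling a positive operator by $\mu(x)$ shifts the logarithm of every eigenvalue by $\log\mu(x)$.

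First I would expand $\Tr\br{\rho^1 \log\rho^1}$ using this decomposition. By orthogonality of the $\ketbra{x}$ blocks together with $\Tr\rho^1_x = 1$, this collapses to $\sum_x \mu^1(x)\log\mu^1(x) + \sum_x \mu^1(x)\Tr\br{\rho^1_x \log\rho^1_x}$, i.e.\ a classical self-entropy term plus an averaged quantum self-entropy term. Next I would expand the cross term $\Tr\br{\rho^1 \log\rho}$ in exactly the same way: here $\rho^1$ contributes the weights $\mu^1(x)$ and the blocks $\rho^1_x$, while $\rho$ contributes the logarithm $\br{\log\mu(x)}\id + \log\rho_x$, yielding $\sum_x \mu^1(x)\log\mu(x) + \sum_x \mu^1(x)\Tr\br{\rho^1_x \log\rho_x}$.

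Subtracting the two expansions, the terms regroup exactly along the classical/quantum split. The $\mu^1(x)\log\mu^1(x) - \mu^1(x)\log\mu(x)$ contributions sum to $\relent{\mu^1}{\mu}$, while the $\Tr\br{\rho^1_x\log\rho^1_x} - \Tr\br{\rho^1_x\log\rho_x}$ contributions are precisely $\relent{\rho^1_x}{\rho_x}$, weighted by $\mu^1(x)$, and hence equal to $\expec{x\leftarrow\mu^1}{\relent{\rho^1_x}{\rho_x}}$. This is exactly the claimed chain rule.

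The only point requiring care, and the main (though minor) obstacle, is the blockwise logarithm identity together with the support conditions. If $\mu^1$ fails to be absolutely continuous with respect to $\mu$, or some $\rho^1_x$ has support outside that of $\rho_x$, then both sides are $+\infty$ under the usual conventions and the identity holds trivially; otherwise the supports nest, every logarithm above is well defined on the relevant support, and the computation goes through verbatim. It is worth checking explicitly that the finiteness of the left-hand side is equivalent to the joint finiteness of the two right-hand terms, which follows because $\mathrm{supp}(\rho^1)\subseteq\mathrm{supp}(\rho)$ holds if and only if $\mathrm{supp}(\mu^1)\subseteq\mathrm{supp}(\mu)$ and $\mathrm{supp}(\rho^1_x)\subseteq\mathrm{supp}(\rho_x)$ for every $x$ with $\mu^1(x)>0$.
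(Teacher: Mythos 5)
Your proof is correct and complete: the blockwise splitting of the logarithm, the expansion of $\Tr\br{\rho^1\log\rho^1}$ and $\Tr\br{\rho^1\log\rho}$ using $\Tr\rho^1_x=1$, and the regrouping into $\relent{\mu^1}{\mu}+\expec{x\leftarrow\mu^1}{\relent{\rho^1_x}{\rho_x}}$ are exactly the standard derivation, and your discussion of the support conditions and the infinite case is careful and accurate. The paper states this as a Fact without supplying any proof, so there is nothing to compare against; your argument is precisely what the authors implicitly rely on.
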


\begin{fact} \label{fact:mutinf is min}
    For quantum states $\rho^{\reg{X}\reg{Y}}$, $\sigma^{\reg{X}}$,
    and $\tau^{\reg{Y}}$,
    it holds that
    \[ \relent{\rho^{\reg{X}\reg{Y}}}{\sigma^{\reg{X}} \otimes \tau^{\reg{Y}}}
    \geq \relent{\rho^{\reg{X}\reg{Y}}}{\rho^{\reg{X}}\otimes \rho^{\reg{Y}}}=\mutinf{X}{Y}_{\rho}. \]
\end{fact}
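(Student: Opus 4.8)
The plan is to compute both relative entropies explicitly and reduce their difference to a manifestly non-negative quantity. Writing
$\relent{\rho^{\reg{X}\reg{Y}}}{\sigma^{\reg{X}} \otimes \tau^{\reg{Y}}} = \Tr\br{\rho^{\reg{X}\reg{Y}}\log\rho^{\reg{X}\reg{Y}}} - \Tr\br{\rho^{\reg{X}\reg{Y}}\log\br{\sigma^{\reg{X}} \otimes \tau^{\reg{Y}}}}$,
the key observation is that the logarithm of a tensor product splits additively:
$\log\br{\sigma^{\reg{X}} \otimes \tau^{\reg{Y}}} = \log\sigma^{\reg{X}} \otimes \id_{\cspace{Y}} + \id_{\cspace{X}} \otimes \log\tau^{\reg{Y}}$.
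Substituting this and tracing out in each term the register that carries only the identity, the cross term collapses onto the marginals of $\rho$: one gets $\Tr\br{\rho^{\reg{X}\reg{Y}}\br{\log\sigma^{\reg{X}} \otimes \id_{\cspace{Y}}}} = \Tr\br{\rho^{\reg{X}}\log\sigma^{\reg{X}}}$ and likewise $\Tr\br{\rho^{\reg{X}\reg{Y}}\br{\id_{\cspace{X}} \otimes \log\tau^{\reg{Y}}}} = \Tr\br{\rho^{\reg{Y}}\log\tau^{\reg{Y}}}$. Hence $\relent{\rho^{\reg{X}\reg{Y}}}{\sigma^{\reg{X}} \otimes \tau^{\reg{Y}}} = \Tr\br{\rho^{\reg{X}\reg{Y}}\log\rho^{\reg{X}\reg{Y}}} - \Tr\br{\rho^{\reg{X}}\log\sigma^{\reg{X}}} - \Tr\br{\rho^{\reg{Y}}\log\tau^{\reg{Y}}}$.

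The equality $\mutinf{X}{Y}_{\rho} = \relent{\rho^{\reg{X}\reg{Y}}}{\rho^{\reg{X}}\otimes \rho^{\reg{Y}}}$ asserted on the right of the statement has already been noted in the preliminaries, so it remains only to establish the inequality. Applying the very same expansion to $\relent{\rho^{\reg{X}\reg{Y}}}{\rho^{\reg{X}}\otimes \rho^{\reg{Y}}}$ (that is, taking $\sigma^{\reg{X}} = \rho^{\reg{X}}$ and $\tau^{\reg{Y}} = \rho^{\reg{Y}}$) and subtracting, all the $\Tr\br{\rho^{\reg{X}\reg{Y}}\log\rho^{\reg{X}\reg{Y}}}$ terms cancel and the difference regroups into two single-register relative entropies:
\[ \relent{\rho^{\reg{X}\reg{Y}}}{\sigma^{\reg{X}} \otimes \tau^{\reg{Y}}} - \mutinf{X}{Y}_{\rho} = \relent{\rho^{\reg{X}}}{\sigma^{\reg{X}}} + \relent{\rho^{\reg{Y}}}{\tau^{\reg{Y}}}. \]
The proof then concludes by invoking the non-negativity of relative entropy (Klein's inequality), which forces the right-hand side to be at least $0$ and yields the claimed bound.

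There is no serious obstacle here; this is a direct computation and the only point requiring care is the support condition, since $\relent{\rho}{\sigma}$ is finite only when $\mathrm{supp}(\rho) \subseteq \mathrm{supp}(\sigma)$. If either $\sigma^{\reg{X}}$ or $\tau^{\reg{Y}}$ fails to support the corresponding marginal of $\rho^{\reg{X}\reg{Y}}$, the left-hand side of the statement is $+\infty$ and the inequality holds trivially, so we may safely restrict attention to the case where all the logarithms appearing above are well defined.
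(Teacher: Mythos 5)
Your proof is correct. The paper states this fact without proof, so there is no in-paper argument to compare against; your derivation is the standard one, and the exact identity you obtain, $\relent{\rho^{\reg{X}\reg{Y}}}{\sigma^{\reg{X}} \otimes \tau^{\reg{Y}}} = \mutinf{X}{Y}_{\rho} + \relent{\rho^{\reg{X}}}{\sigma^{\reg{X}}} + \relent{\rho^{\reg{Y}}}{\tau^{\reg{Y}}}$, together with non-negativity of relative entropy, fully establishes the claim, with the support caveat correctly handled.
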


\begin{fact}[\cite{Watrouslecturenote,Jain:2003:LBB:946243.946331}]
    \label{fact:one norm and relent}
    For quantum states $\rho$ and $\sigma$, it holds that
    \[ \onenorm{\rho-\sigma} \leq \sqrt{\relent{\rho}{\sigma}} \quad \text{ and } \quad  1-\F(\rho,\sigma)\leq\relent{\rho}{\sigma} . \]
\end{fact}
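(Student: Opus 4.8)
The plan is to reduce each of the two inequalities to a purely classical statement about a pair of probability distributions by passing through an optimally chosen measurement, and then to invoke the monotonicity of the relative entropy under the resulting measurement channel (the data-processing inequality), which itself follows from joint convexity (\cref{fact:relative entropy joint convexity}). For the bound $\onenorm{\rho-\sigma}\leq\sqrt{\relent{\rho}{\sigma}}$, I would start from the variational characterization of the $\ell_1$-distance: $\onenorm{\rho-\sigma}=\sum_i\abs{\Tr\br{M_i\rho}-\Tr\br{M_i\sigma}}$ for the two-outcome projective measurement $\set{M_+,M_-}$ onto the positive and negative eigenspaces of $\rho-\sigma$. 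Writing $P$ and $Q$ for the (binary) outcome distributions obtained by applying this measurement to $\rho$ and $\sigma$, we get $\onenorm{\rho-\sigma}=\onenorm{P-Q}$ exactly, while data processing gives $\relent{P}{Q}\leq\relent{\rho}{\sigma}$. It then suffices to prove the classical Pinsker-type bound $\onenorm{P-Q}\leq\sqrt{\relent{P}{Q}}$ (up to the universal constant supplied by Pinsker's inequality), which for a two-point distribution is a one-variable inequality verified by elementary calculus.

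For the bound $1-\F(\rho,\sigma)\leq\relent{\rho}{\sigma}$ I would instead use the measurement characterization of fidelity: there is a measurement whose outcome distributions $P,Q$ (of $\rho$ and $\sigma$) satisfy $\F(\rho,\sigma)=\sum_i\sqrt{P_iQ_i}$. Data processing again yields $\relent{P}{Q}\leq\relent{\rho}{\sigma}$, so it is enough to establish the classical inequality $1-\sum_i\sqrt{P_iQ_i}\leq\relent{P}{Q}$. I would first show $\relent{P}{Q}\geq-2\log\sum_i\sqrt{P_iQ_i}$ by writing $\sum_i\sqrt{P_iQ_i}=\expec{i\leftarrow P}{\sqrt{Q_i/P_i}}$ and applying Jensen's inequality to the convex function $-\log$, and then finish with the elementary scalar bound $-2\log t\geq 1-t$, valid for all $t\in(0,1]$; chaining these gives $\relent{P}{Q}\geq-2\log\F(\rho,\sigma)\geq 1-\F(\rho,\sigma)$.

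The main obstacle is the one genuinely quantum ingredient, namely the data-processing inequality for the relative entropy under the two measurement channels above; everything else is classical and elementary. I would obtain it from joint convexity (\cref{fact:relative entropy joint convexity}) together with the invariance of relative entropy under isometries and its monotonicity under the partial trace, or simply cite it as standard, as the statement already does through \cite{Watrouslecturenote,Jain:2003:LBB:946243.946331}. A secondary point is to justify the two measurement characterizations themselves---Holevo--Helstrom optimality for $\onenorm{\rho-\sigma}$ and the existence of a fidelity-attaining measurement---but both are standard results that can be quoted.
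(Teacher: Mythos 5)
The paper offers no proof of this Fact---it is quoted from \cite{Watrouslecturenote,Jain:2003:LBB:946243.946331}---so there is no internal argument to compare against; your proposal must be judged on its own. Your two-step scheme (reduce to classical outcome distributions via a well-chosen measurement, invoke monotonicity of relative entropy under that measurement, which indeed follows from joint convexity as in \cref{fact:relative entropy joint convexity}, then finish with an elementary classical inequality) is the standard derivation. For the second inequality it is complete and correct: the Fuchs--Caves optimal measurement gives $\F(\rho,\sigma)=\sum_i\sqrt{P_iQ_i}$, Jensen's inequality applied to $-\log$ gives $\relent{P}{Q}\geq-2\log\sum_i\sqrt{P_iQ_i}$, and $-2\log t\geq 1-t$ on $(0,1]$ closes the chain.

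For the first inequality there is a quantitative gap that you only half-acknowledge with the phrase ``up to the universal constant.'' Classical Pinsker gives $\onenorm{P-Q}\leq\sqrt{2\ln 2\cdot\relent{P}{Q}}$ when the relative entropy is in bits, so your route yields $\onenorm{\rho-\sigma}\leq\sqrt{2\ln 2}\cdot\sqrt{\relent{\rho}{\sigma}}$ with $\sqrt{2\ln 2}\approx 1.18$, not the constant $1$ printed in the statement. No sharper classical argument can repair this, because the constant-$1$ form is already false for commuting (classical) states: for $P=(0.6,0.4)$ and $Q=(0.5,0.5)$ one has $\onenorm{P-Q}=0.2$ while $\relent{P}{Q}\approx 0.029$ bits, so $\sqrt{\relent{P}{Q}}\approx 0.17<0.2$. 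In other words, the Fact as printed is quoted with a slightly too-generous constant; the cited sources carry the Pinsker factor. This is harmless for the paper---it only perturbs the absolute constants in \cref{lem:mainlemmaproduct} and \cref{thm:main}---but in your write-up you should either carry the factor $\sqrt{2\ln 2}$ explicitly or note that you are proving the corrected statement. Everything else (the Holevo--Helstrom measurement attaining $\onenorm{\rho-\sigma}$, the fidelity-attaining measurement, and data processing for relative entropy) is standard and correctly deployed.
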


\begin{fact}
	\label{fact:subsystem monotone}
	The relative entropy is non-increasing when subsystems are considered.
	Let $\rho^{XY}$ and $\sigma^{XY}$ be quantum states, then $\relent{\rho^{XY}}{\sigma^{XY}} \geq \relent{\rho^{X}}{\sigma^{X}}.$
\end{fact}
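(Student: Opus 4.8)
The plan is to deduce this monotonicity under partial trace from the joint convexity of relative entropy already granted in \cref{fact:relative entropy joint convexity}, combined with the elementary fact that relative entropy is unchanged under a common unitary conjugation of its two arguments. At the outset I would dispose of the degenerate case: if $\mathrm{supp}\br{\rho^{XY}} \not\subseteq \mathrm{supp}\br{\sigma^{XY}}$ then $\relent{\rho^{XY}}{\sigma^{XY}} = +\infty$ and the claimed inequality is trivial, so we may assume the supports are nested and that every quantity below is finite.

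The one genuinely nontrivial ingredient is that tracing out \reg{Y} (and replacing it by the maximally mixed state) can be realised as a convex combination of unitaries acting only on \cspace{Y}. Let $d = \dim \cspace{Y}$ and let $\set{\unitary{W}_j}_{j=1}^{d^2}$ be the Heisenberg--Weyl operators on \cspace{Y}, which satisfy the depolarizing identity $\frac{1}{d^2}\sum_j \unitary{W}_j M \adjoint{\unitary{W}_j} = \frac{\Tr M}{d}\,\id_{\cspace{Y}}$ for every operator $M$ on \cspace{Y}. Setting $\unitary{T}_j \defeq \id_{\cspace{X}} \otimes \unitary{W}_j$ and $\E\br{\cdot} \defeq \frac{1}{d^2}\sum_j \unitary{T}_j \br{\cdot} \adjoint{\unitary{T}_j}$, one obtains a (mixed-unitary, hence CPTP) channel with $\E\br{\rho^{XY}} = \rho^{X} \otimes \frac{\id_{\cspace{Y}}}{d}$, and likewise for $\sigma$.

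I would then invoke two one-line observations, both immediate from the definition of relative entropy. First, $\relent{\unitary{U}\rho\adjoint{\unitary{U}}}{\unitary{U}\sigma\adjoint{\unitary{U}}} = \relent{\rho}{\sigma}$ for any unitary \unitary{U}, since $\log\br{\unitary{U}\rho\adjoint{\unitary{U}}} = \unitary{U}\br{\log\rho}\adjoint{\unitary{U}}$ and the trace is cyclic. Second, a common tensor factor leaves the relative entropy unchanged: expanding the logarithms gives $\relent{\rho^{X} \otimes \frac{\id_{\cspace{Y}}}{d}}{\sigma^{X} \otimes \frac{\id_{\cspace{Y}}}{d}} = \relent{\rho^{X}}{\sigma^{X}}$. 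Combining the second observation (together with $\E\br{\rho^{XY}} = \rho^{X}\otimes\id_{\cspace{Y}}/d$), then the $d^2$-term version of \cref{fact:relative entropy joint convexity} applied to the uniform mixture over $j$, and finally the first observation termwise,
\begin{align*}
	\relent{\rho^{X}}{\sigma^{X}}
	&= \relent{\E\br{\rho^{XY}}}{\E\br{\sigma^{XY}}} \\
	&\leq \frac{1}{d^2}\sum_j
		\relent{\unitary{T}_j \rho^{XY} \adjoint{\unitary{T}_j}}
		{\unitary{T}_j \sigma^{XY} \adjoint{\unitary{T}_j}} \\
	&= \relent{\rho^{XY}}{\sigma^{XY}},
\end{align*}
which is exactly the desired bound.

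The only step requiring care is the depolarizing identity, i.e.\ that the completely depolarizing channel on \cspace{Y} is mixed-unitary; this standard Pauli-twirl computation carries the whole weight of the argument, since everything else is either definitional (unitary invariance, the common-factor identity) or a direct instance of the already-granted joint convexity. A minor bookkeeping point is the passage from the two-term statement of \cref{fact:relative entropy joint convexity} to a $d^2$-term mixture, which follows by a routine induction on the number of terms. I note that this is just the special case of the data-processing inequality in which the channel is the partial trace, but the route above is self-contained given the facts available.
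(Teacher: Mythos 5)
Your proof is correct. The paper itself offers no proof of \cref{fact:subsystem monotone} --- it is asserted as a standard fact (it is the Lindblad--Uhlmann monotonicity of relative entropy under partial trace, a special case of the data-processing inequality) --- so there is no argument in the paper to compare against line by line. Your route is the classical Uhlmann reduction of monotonicity to joint convexity, and every step checks out: the Weyl-twirl identity $\frac{1}{d^2}\sum_j \unitary{W}_j M \adjoint{\unitary{W}_j} = \frac{\Tr M}{d}\id_{\cspace{Y}}$ does realise the completely depolarizing channel on $\cspace{Y}$ as a uniform mixture of unitaries; unitary invariance of $\relent{\cdot}{\cdot}$ and the identity $\relent{\rho^{X}\otimes\tau}{\sigma^{X}\otimes\tau}=\relent{\rho^{X}}{\sigma^{X}}$ are immediate from the definition; the $d^2$-term form of \cref{fact:relative entropy joint convexity} follows by induction; and your support remark is sound, since nesting of supports for $\rho^{XY},\sigma^{XY}$ implies nesting for the marginals, so all quantities in the chain are finite in the nontrivial case. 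What your approach buys is a self-contained derivation from the one convexity fact the paper already grants, at the cost of importing the Heisenberg--Weyl construction; the paper instead simply delegates the statement to the literature, where it is usually obtained either exactly as you do or as a corollary of monotonicity under general CPTP maps via the Stinespring dilation.
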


The following fact is easily verified.
\begin{fact}\label{fact:expecclose}
	Let $0 < \ve, \ve' < 1$, $ 0 < c $,
	$\mu$ and $\mu'$ be probability distributions on a set $\finset{X}$,
	and $f:\finset{X}\rightarrow[0,c]$ be a function.
	If $\expec{x\leftarrow\mu}{f(x)}\leq\ve$ and $\onenorm{\mu-\mu'}\leq\ve'$
	then $\expec{x\leftarrow\mu'}{f(x)}\leq\ve+c\ve'$.
\end{fact}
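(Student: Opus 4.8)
The plan is to compare the two expectations directly, term by term over $\finset{X}$. First I would rewrite the quantity to be bounded as the starting expectation plus a correction:
\[
	\expec{x\leftarrow\mu'}{f(x)} - \expec{x\leftarrow\mu}{f(x)}
	= \sum_{x \in \finset{X}} \br{\mu'(x) - \mu(x)} \, f(x),
\]
simply by expanding both expectations as sums weighted by the respective distributions. The goal then reduces to bounding the absolute value of this single sum.

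The key step is to combine the boundedness of $f$ with the $\ell_1$ hypothesis on the distributions. Since $f(x)\in[0,c]$ for every $x$, each summand obeys $\abs{\br{\mu'(x)-\mu(x)}f(x)} \leq c\cdot\abs{\mu'(x)-\mu(x)}$, so by the triangle inequality the whole correction term is at most $c\sum_{x}\abs{\mu'(x)-\mu(x)}$. I would then identify this plain $\ell_1$ sum with $\onenorm{\mu-\mu'}$: viewing $\mu$ and $\mu'$ as the diagonal (classical) states of \Cref{def:tracedistance}, the sum of singular values of $\mu-\mu'$ is exactly $\sum_x\abs{\mu(x)-\mu'(x)}$, which is bounded by $\ve'$ by assumption. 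Hence the correction is at most $c\ve'$.

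Putting the two pieces together gives $\expec{x\leftarrow\mu'}{f(x)} \leq \expec{x\leftarrow\mu}{f(x)} + c\ve' \leq \ve + c\ve'$, using the hypothesis $\expec{x\leftarrow\mu}{f(x)}\leq\ve$, which is precisely the claim. There is essentially no obstacle in this argument; the only point requiring a moment's care is matching the normalization of the $\onenorm{\cdot}$ notation to the elementary $\ell_1$ sum $\sum_x\abs{\mu(x)-\mu'(x)}$, after which the bound is immediate. (Note that the constraints $\ve,\ve'<1$ play no role in the inequality itself and are stated only to fix the regime of interest.)
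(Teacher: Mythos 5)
Your argument is correct. The paper states this fact without proof (``easily verified''), and your term-by-term expansion --- bounding $\sum_x \br{\mu'(x)-\mu(x)} f(x)$ in absolute value by $c\sum_x\abs{\mu'(x)-\mu(x)} = c\onenorm{\mu-\mu'}\leq c\ve'$ and adding the hypothesis $\expec{x\leftarrow\mu}{f(x)}\leq\ve$ --- is exactly the intended standard verification, including the correct observation that the paper's $\onenorm{\cdot}$ applied to diagonal (classical) states is the unnormalized $\ell_1$ sum, with no factor of $\tfrac{1}{2}$.
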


\subsection*{Useful lemmas}
Here we state and prove some lemmas that we will use later.

\begin{lemma}\label{lem:rmin is small}
Let $\ket{\psi}^{AB}$ be a bipartite pure state with the marginal state on register $B$ being $\rho$. Let a $0/1$ outcome measurement be performed on register $A$ with outcome $O$.  Let $\prob{O=1} = q$. Let the marginal states on register $B$ conditioned on $O=0$ and $O=1$ be $\rho_0$ and $\rho_1$ respectively. Then, $\rminent{\rho_1}{\rho} \leq \log \frac{1}{q}$.
\end{lemma}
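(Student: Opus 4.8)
The plan is to reduce the claimed min-entropy bound to a single operator inequality, and then to invoke the \emph{no-signaling} principle that a local measurement on register $A$, with its outcome averaged out, leaves the reduced state on register $B$ untouched. First I would fix any measurement (Kraus) operators $\set{K_0, K_1}$ on $A$ realizing the given $0/1$ measurement, so that $\adjoint{K_0}K_0 + \adjoint{K_1}K_1 = \id_{\cspace{A}}$ and the probability of outcome $1$ is $q = \Tr\big[(\adjoint{K_1}K_1 \otimes \id_{\cspace{B}})\ketbra{\psi}\big]$. The normalized conditional marginals on $B$ are then
\[ \rho_1 = \frac{1}{q}\,\partrace{A}{(K_1 \otimes \id_{\cspace{B}})\ketbra{\psi}(\adjoint{K_1} \otimes \id_{\cspace{B}})}, \qquad \rho_0 = \frac{1}{1-q}\,\partrace{A}{(K_0 \otimes \id_{\cspace{B}})\ketbra{\psi}(\adjoint{K_0} \otimes \id_{\cspace{B}})} . \]
A quick check shows these depend only on $\adjoint{K_i}K_i$ (two Kraus operators for the same POVM element differ by a unitary on $A$, which drops out under $\partrace{A}{\cdot}$), so $\rho_0, \rho_1$ are well defined regardless of the chosen representation.

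Next I would observe that the map $\sigma \mapsto K_0 \sigma \adjoint{K_0} + K_1 \sigma \adjoint{K_1}$ acting on register $A$ is trace preserving, and that composing any trace-preserving map on $A$ with the partial trace over $A$ returns the partial trace over $A$. Applying this to $\ketbra{\psi}$ yields the key identity
\[ q\,\rho_1 + (1-q)\,\rho_0 = \partrace{A}{\ketbra{\psi}} = \rho . \]
Since $(1-q)\rho_0$ is positive semidefinite, this gives the operator inequality $q\,\rho_1 \leq \rho$, i.e.\ $\rho_1 \leq \frac{1}{q}\,\rho = 2^{\log(1/q)}\,\rho$. By the definition $\rminent{\rho_1}{\rho} = \min\set{\lambda : \rho_1 \leq 2^{\lambda}\rho}$, the value $\lambda = \log\frac{1}{q}$ is admissible, and the claim $\rminent{\rho_1}{\rho} \leq \log\frac{1}{q}$ follows immediately.

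The only step carrying any real content is the identity $q\,\rho_1 + (1-q)\,\rho_0 = \rho$, which formalizes that measuring $A$ and forgetting the outcome cannot alter $B$'s marginal. I would justify it either by the trace-preservation argument above or, if a basis-dependent computation is preferred, by expanding both sides after writing $\ket{\psi} = \sum_i \sqrt{\lambda_i}\,\ket{a_i}\ket{b_i}$ in Schmidt form. Everything else is purely definitional, so I do not anticipate a genuine obstacle; the only care needed is to phrase the conditional marginals correctly and to note their independence of the Kraus representation.
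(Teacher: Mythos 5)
Your proposal is correct and follows essentially the same route as the paper: the paper's entire proof is the observation that $\rho = q\rho_1 + (1-q)\rho_0$, from which $\rho_1 \leq 2^{\log(1/q)}\rho$ and hence the bound on $\rminent{\rho_1}{\rho}$ follow immediately. You simply supply the (routine) justification of that identity via trace preservation of the measurement channel, which the paper leaves as "easily seen."
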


\begin{proof}
It is easily seen that $\rho = q \rho_1 + (1-q) \rho_0$. Hence $\rminent{\rho_1}{\rho} \leq \log \frac{1}{q}$.
\end{proof}

\BigCalculation
The following lemma states that when the concerned mutual information is small, then a measurement on Alice's side can be simulated by a unitary operation on Alice's side.
\begin{lemma}
	\label{lem:one sided transformation} Let $\mu$ be a probability distribution on \finset{X}.
	Let
	\[ \ket{\varphi} \defeq \sum_{x \in \finset{X}}
	\sqrt{\fn{\mu}{x}} \ket{xx}^{\xtildesp X} \otimes \ket{\psi_x}^{AB} \]
	be a joint pure state of Alice and Bob, where registers $\xtildesp X A$ are with Alice and register $B$ is with Bob.
	Let $ \mutinf{\register{X}}{B}_{\varphi} \leq \ve $
	and $ \ket{\varphi_x} \defeq \ket{xx} \otimes \ket{\psi_x} $.
	There exist unitary operators $\set{\unitary{U}_x}_{x\in\finset{X}}$
	acting on $\xtildesp X A$ such that
	\[\expec{x\leftarrow\mu}{
		\onenorm{\ketbra{\varphi_x} -
		\br{\unitary{U}_x \otimes \id_{\cspace{B}}}
		\ketbra{\varphi}
		\br{\adjoint{\unitary{U}_x} \otimes \id_{\cspace{B}}}}}\leq4\sqrt{\ve}.\]
\end{lemma}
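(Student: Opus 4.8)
The plan is to exploit the smallness of $\mutinf{\register X}{B}_\varphi$ to show that Bob's marginal is essentially independent of $x$, and then to use Uhlmann's theorem to realize the required change of state by a unitary on Alice's side alone. Throughout, write $\psi_x^B \defeq \partrace{A}{\ketbra{\psi_x}}$ for Bob's reduced state conditioned on the classical value $x$, and $\rho^B \defeq \expec{x\leftarrow\mu}{\psi_x^B}$ for Bob's marginal of $\ket{\varphi}$. The role of the copy register $\xtildesp$ is precisely to keep $\register X$ classical, so that the reduced state $\varphi^{\register X B} = \sum_{x} \fn{\mu}{x} \ketbra{x} \otimes \psi_x^B$ is classical on $\register X$.

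First I would turn the hypothesis into a statement about fidelities. Using $\mutinf{\register X}{B}_\varphi = \relent{\varphi^{\register X B}}{\varphi^{\register X}\otimes\varphi^B}$ together with the relative-entropy chain rule (\cref{fact:relative entropy splitting}), applied with both classical marginals equal to $\mu$ so that the term $\relent{\mu}{\mu}$ vanishes, I get $\mutinf{\register X}{B}_\varphi = \expec{x\leftarrow\mu}{\relent{\psi_x^B}{\rho^B}}$. Combining this with the inequality $1 - \F(\psi_x^B,\rho^B) \leq \relent{\psi_x^B}{\rho^B}$ from \cref{fact:one norm and relent} yields $\expec{x\leftarrow\mu}{1 - \F(\psi_x^B,\rho^B)} \leq \ve$, i.e.\ $\psi_x^B$ is on average close to $\rho^B$ in fidelity.

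Next I would build the unitaries. The point is that $\ket{\varphi_x}$ is a purification of $\psi_x^B$ and $\ket{\varphi}$ is a purification of $\rho^B$, both living in the joint space $\xtildesp X A B$ with the purifying registers $\xtildesp X A$ on Alice's side. By Uhlmann's theorem (\cref{thm:Uhlmann}) there is a purification $\ket{\theta_x}$ of $\rho^B$ in this same space with $\abs{\langle\varphi_x|\theta_x\rangle} = \F(\psi_x^B,\rho^B)$. Since $\ket{\theta_x}$ and $\ket{\varphi}$ are two purifications of the \emph{same} state $\rho^B$ on the \emph{same} total space, the unitary equivalence of purifications supplies a unitary $\unitary{U}_x$ acting only on Alice's registers $\xtildesp X A$ with $\ket{\theta_x} = \br{\unitary{U}_x\otimes\id_{\cspace B}}\ket{\varphi}$, and hence $\abs{\bra{\varphi_x}\br{\unitary{U}_x\otimes\id_{\cspace B}}\ket{\varphi}} = \F(\psi_x^B,\rho^B)$. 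I expect this to be the crux of the argument: the real content of the lemma is that the transformation can be performed \emph{locally} on Alice, and it is exactly the combination of Uhlmann's theorem with the unitary equivalence of purifications (using that $\xtildesp X A$ is a large enough purifying system) that guarantees the unitary touches only Alice.

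Finally I would convert back to $\ell_1$-distance and average over $x$. Both $\ketbra{\varphi_x}$ and $\br{\unitary{U}_x\otimes\id_{\cspace B}}\ketbra{\varphi}\br{\adjoint{\unitary{U}_x}\otimes\id_{\cspace B}}$ are pure states whose overlap is $\F(\psi_x^B,\rho^B)$, so by \cref{prop:tracefidelityequi} their $\ell_1$-distance is at most $2\sqrt{1 - \F(\psi_x^B,\rho^B)^2} \leq 2\sqrt{2}\,\sqrt{1 - \F(\psi_x^B,\rho^B)}$, where I used $1 - t^2 \leq 2(1-t)$. Taking the expectation over $x\leftarrow\mu$, pulling the square root out by concavity (Jensen's inequality), and invoking the fidelity bound from the first step gives an average $\ell_1$-distance at most $2\sqrt{2}\,\sqrt{\ve} \leq 4\sqrt{\ve}$, which is the claim. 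The remaining work is only the routine constant-chasing, and it leaves comfortable slack below the target $4\sqrt{\ve}$.
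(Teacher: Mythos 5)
Your proposal is correct and follows essentially the same route as the paper: turn the mutual-information bound into an average fidelity bound $\expec{x\leftarrow\mu}{1-\F(\psi_x^B,\rho^B)}\leq\ve$ via $\mutinf{\register{X}}{B}=\expec{x\leftarrow\mu}{\relent{\psi_x^B}{\rho^B}}$ and \cref{fact:one norm and relent}, construct the local unitaries from Uhlmann's theorem plus the unitary equivalence of purifications, and finish with the pure-state trace-distance formula and Jensen. The only cosmetic difference is that the paper applies concavity of $\sqrt{1-\alpha^2}$ directly, while you first use $1-t^2\leq 2(1-t)$ and then concavity of $\sqrt{\cdot}$; both yield the same constant $2\sqrt{2}\leq 4$.
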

\begin{proof}
	Let us denote the reduced state of Bob in
	\ket{\varphi_x} and \ket{\varphi} by
	\begin{align*}
		\rho_x \defeq \partrace{A}{\ketbra{\psi_x}}
		\quad \text{and} \quad
		\rho \defeq \partrace{\xtildesp XA}{\ketbra{\varphi}} .
	\end{align*}
	Using \cref{fact:one norm and relent}, it holds that
	\begin{align*}
		\varepsilon & \geq \mutinf{\register{X}}{\register{B}} = \expec{x\leftarrow\mu}{\relent{\rho_x}{\rho}} \geq1-\expec{x\leftarrow\mu}{\F(\rho_x,\rho)}.
	\end{align*}
	By the unitary equivalence of purifications and \cref{thm:Uhlmann},
	there exists a $\unitary{\unitary{U}}_x$
	for each $x\in\finset{X}$ such that
	\[ |\bra{\varphi_x}\br{\unitary{U}_x \otimes \id_{\cspace{B}}}\ket{\varphi}|=\F(\rho_x,\rho). \]
	The lemma follows from the following calculation.
	\begin{align}
		& \expec{x\leftarrow\mu}{
		\onenorm{\ketbra{\varphi_x} -
		\br{\unitary{U}_x \otimes \id_{\cspace{B}}}
		\ketbra{\varphi}
		\br{\adjoint{\unitary{U}_x} \otimes \id_{\cspace{B}}}}} \nonumber \\
		&\qquad = 2 \expec{x\leftarrow\mu}
		{\sqrt{1-|\bra{\varphi_x}\br{\unitary{U}_x \otimes \id_{\cspace{B}}}\ket{\varphi}|^2}}
		\label{eqn:trace distance pure} \\
		&\qquad \leq 2 \sqrt{1-\expec{x\leftarrow\mu}
		{|\bra{\varphi_x}\br{\unitary{U}_x \otimes \id_{\cspace{B}}}\ket{\varphi}|}^2}
		\label{eqn:concavity of sqrt} \\
		&\qquad = 2 \sqrt{1-\expec{x\leftarrow\mu}{\F(\rho_x,\rho)}^2} \nonumber \\
		&\qquad \leq 4 \sqrt{\ve} . \nonumber
	\end{align}
	where \cref{eqn:trace distance pure} follows from \cref{prop:tracefidelityequi}
	and at \cref{eqn:concavity of sqrt} we used the
	concavity of the function $\sqrt{1-\alpha^2}$.
\end{proof}

The following is a generalization of the above lemma that states that when
the concerned mutual informations are small then the simultaneous measurements on
Alice's and Bob's side can be simulated by unitary operations on Alice's and Bob's side.
It is a special case of a more general result in Ref.~\cite{Jain:2008}.

\begin{lemma}[\cite{Jain:2008}]
	\label{lem:two sided transformation}
	Let $\mu$ be a probability distribution over $\finset{X}\times\finset{Y}$.
	Let $\mu_X$ and $\mu_Y$ be the marginals of $\mu$ on $\finset{X}$ and $\finset{Y}$.
	Let
	\[ \ket{\varphi} \defeq \sum_{x \in \finset{X},
		y \in \finset{Y}} \sqrt{\mu(x,y)}
		\ket{xxyy}^{\xtildesp X\ytildesp Y} \otimes \ket{\psi_{x,y}}^{AB} \]
	be a joint pure state of Alice and Bob, where registers $\xtildesp XA$
	belong to Alice and registers $\ytildesp YB$ belong to Bob.
	Let \[ \mutinf{\register{X}}{BY\ytildesp}_{\varphi} \leq \ve
	\quad \text{and} \quad
	\mutinf{\register{Y}}{AX\xtildesp}_{\varphi} \leq \varepsilon . \]
	Let $ \ket{\varphi_{x,y}} \defeq \ket{xxyy} \otimes \ket{\psi_{x,y}} $.
	There exist unitary operators
	$ \cbr{\unitary{U}_x}_{x \in \finset{X}} $
	on \cspace{\xtildesp XA} and $ \cbr{\unitary{V}_y}_{y \in \finset{Y}} $
	on \cspace{\ytildesp YB} such that
	\begin{align*}
		&\expec{(x,y)\leftarrow \mu}{\onenorm{\ketbra{\varphi_{x,y}}
		- \br{\unitary{U}_x \otimes \unitary{V}_y}
		\ketbra{\varphi}
		\br{\adjoint{\unitary{U}}_x \otimes \adjoint{\unitary{V}}_y}}} \\
		& \qquad \leq 8 \sqrt{\ve} + 2\onenorm{\mu - \mu_X \otimes \mu_Y}.
	\end{align*}
\end{lemma}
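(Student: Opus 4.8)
The plan is to obtain the two transformations separately from the one-sided \cref{lem:one sided transformation} and then splice them together with the triangle inequality. Write $\ket{\varphi_x}$ for the normalized state obtained by measuring register $\register{X}$ in $\ket{\varphi}$ and observing outcome $x$, and $\ket{\varphi_y}$ analogously for register $\register{Y}$; explicitly $\ket{\varphi_x} \propto \sum_y \sqrt{\mu(x,y)} \ket{xxyy} \otimes \ket{\psi_{x,y}}$. Grouping Bob's registers $\ytildesp\register{Y}B$ as the ``$B$'' side of \cref{lem:one sided transformation} and using the hypothesis $\mutinf{\register{X}}{BY\ytildesp}_{\varphi} \le \ve$, I get unitaries $\set{\unitary{U}_x}$ on $\xtildesp\register{X}A$ with $\expec{x\leftarrow\mu_X}{\onenorm{\ketbra{\varphi_x} - \br{\unitary{U}_x \otimes \id_{\cspace{B}}}\ketbra{\varphi}\br{\adjoint{\unitary{U}_x} \otimes \id_{\cspace{B}}}}} \le 4\sqrt{\ve}$. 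Symmetrically, grouping $\xtildesp\register{X}A$ as the other side and using $\mutinf{\register{Y}}{AX\xtildesp}_{\varphi} \le \ve$ gives $\set{\unitary{V}_y}$ on $\ytildesp\register{Y}B$ with the analogous bound built from $\ket{\varphi_y}$.

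To splice, I attach fresh classical registers holding the outcomes $x,y$, so that expectations of trace distances become a single trace distance of block-diagonal states. The target quantity is then $\onenorm{\sigma - \tilde\sigma}$ with $\sigma = \expec{(x,y)\leftarrow\mu}{\ketbra{xy} \otimes \ketbra{\varphi_{x,y}}}$ and $\tilde\sigma$ the same but with $\ketbra{\varphi_{x,y}}$ replaced by $\br{\unitary{U}_x \otimes \unitary{V}_y}\ketbra{\varphi}\br{\adjoint{\unitary{U}_x} \otimes \adjoint{\unitary{V}_y}}$. I would interpose the state $\tau$ in which $x,y$ are drawn from $\mu_X \otimes \mu_Y$ instead of $\mu$, together with the intermediate state in which only $\unitary{U}_x$ has acted while Bob's half still carries $\ket{\varphi_y}$. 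Two reductions finish this part: stripping the common controlled unitary $\sum_x \ketbra{x} \otimes \unitary{U}_x$ (a trace-norm preserving conjugation) collapses the ``$\unitary{U}_x$-only versus full'' difference onto Bob's $4\sqrt{\ve}$ bound; and observing that measuring register $\register{Y}$ is a quantum operation carrying Alice's one-sided configuration onto the remaining pair of states lets \cref{prop:monotonequantumoperation} collapse that difference onto Alice's $4\sqrt{\ve}$ bound. Together these give $\onenorm{\sigma - \tau} \le 8\sqrt{\ve}$.

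It remains to trade the product weighting in $\tau$ for the true weighting in $\tilde\sigma$. Because each $\br{\unitary{U}_x \otimes \unitary{V}_y}\ketbra{\varphi}\br{\adjoint{\unitary{U}_x} \otimes \adjoint{\unitary{V}_y}}$ is a normalized state, the two differ only through their scalar weights, so $\onenorm{\tau - \tilde\sigma} = \sum_{x,y} \abs{\mu(x,y) - \mu_X(x)\mu_Y(y)} = \onenorm{\mu - \mu_X \otimes \mu_Y}$. A final triangle inequality yields $\expec{(x,y)\leftarrow\mu}{\onenorm{\ketbra{\varphi_{x,y}} - \br{\unitary{U}_x \otimes \unitary{V}_y}\ketbra{\varphi}\br{\adjoint{\unitary{U}_x} \otimes \adjoint{\unitary{V}_y}}}} = \onenorm{\sigma - \tilde\sigma} \le 8\sqrt{\ve} + \onenorm{\mu - \mu_X \otimes \mu_Y}$, which sits comfortably inside the stated $8\sqrt{\ve} + 2\onenorm{\mu - \mu_X \otimes \mu_Y}$.

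The delicate point, and the quantum ingredient taken from \cite{Jain:2008}, is that the marginal unitaries $\unitary{U}_x$ and $\unitary{V}_y$, each certified only against $\ket{\varphi}$, still act correctly when applied simultaneously: $\unitary{V}_y$ sends $\ket{\varphi}$ to (approximately) $\ket{\varphi_y}$, and the very same $\unitary{U}_x$ then sends $\ket{\varphi_y}$ onward to $\ket{\varphi_{x,y}}$, not merely $\ket{\varphi}$ to $\ket{\varphi_x}$. The measurement-monotonicity step is exactly what makes this legitimate without letting $\unitary{U}_x$ depend on $y$, and it is here that independence of the two question registers is exploited; any failure of independence is charged to the $\onenorm{\mu - \mu_X \otimes \mu_Y}$ term, which is why the lemma itself needs no product assumption on $\mu$.
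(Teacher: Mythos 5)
Your proposal is correct and follows essentially the same route as the paper's own proof: obtain $\unitary{U}_x$ and $\unitary{V}_y$ from \cref{lem:one sided transformation}, interpose the same intermediate state (product weighting, $\unitary{U}_x$ applied to $\ket{\varphi_y}$), handle one leg by invariance of the trace norm under the controlled unitary and the other by monotonicity under the measure-$\register{Y}$-and-record operation, and charge the change of weighting to $\onenorm{\mu - \mu_X \otimes \mu_Y}$. Your block-orthogonality observation even gives the slightly sharper coefficient $1$ in place of the paper's $2$ on that last term, but this is only a cosmetic improvement.
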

\begin{proof}
Let $\ket{\varphi_x}$ be the state obtained when we measure register $X$ in $\ket{\varphi}$ and obtain $x$.
Similarly let  $\ket{\varphi_y}$ be the state obtained when we measure register $Y$ in $\ket{\varphi}$ and obtain $y$.
By \cref{lem:one sided transformation}, there exist unitary operators
$\set{\unitary{U}_x}_{x\in\finset{X}}$ and $\set{\unitary{V}_y}_{y\in\finset{Y}}$ such that
\[\expec{x\leftarrow\mu_X}{
		\onenorm{\ketbra{\varphi_x} -
		\br{\unitary{U}_x \otimes \id_{\cspace{B}}}
		\ketbra{\varphi}
		\br{\adjoint{\unitary{U}_x} \otimes \id_{\cspace{B}}}}}\leq4\sqrt{\ve} \]
and
\[\expec{y\leftarrow\mu_Y}{
		\onenorm{\ketbra{\varphi_y} -
		\br{\id_{\cspace{A}} \otimes \unitary{V}_y }
		\ketbra{\varphi}
		\br{\id_{\cspace{A}} \otimes \adjoint{\unitary{V}_y}}}}\leq4\sqrt{\ve}.\]
	Using the above, we get the bound of \cref{eqn:intermediate bound}
	from the calculation that is on the bottom of this page.
	\Cref{eqn:triangle inequality} follows from the triangle
	inequality, the second term in \cref{eqn:Ux} is because $\unitary{U}_x$ doesn't
	change the $\ell_1$-distance, and the first term in \cref{eqn:Ux}
	follows from \cref{prop:monotonequantumoperation} with the
	super-operator that corresponds to measuring \register{Y} in the
	standard basis and storing the outcome in a new register.
	The lemma follows from the following calculation.
	\begin{align*}
		&\expec{(x,y)\leftarrow \mu}{\onenorm{\ketbra{\varphi_{x,y}}
		- \br{\unitary{U}_x \otimes \unitary{V}_y}
		\ketbra{\varphi}
		\br{\adjoint{\unitary{U}}_x \otimes \adjoint{\unitary{V}}_y}}} \\
		& \quad {} = \onenormbigg{\expecbig{(x,y)\leftarrow \mu}
		{\ketbra{xy} \otimes \ketbra{\varphi_{x,y}} \\
		& \quad \quad {} - \ketbra{xy} \otimes \br{\unitary{U}_x \otimes \unitary{V}_y}
		\ketbra{\varphi}
		\br{\adjoint{\unitary{U}}_x \otimes \adjoint{\unitary{V}}_y}}} \\
		& \quad {} \leq \onenormbigg{\expec{(x,y)\leftarrow \mu}{\ketbra{xy}\otimes\ketbra{\varphi_{x,y}}} \\
		& \quad \quad {} - \expecbig{(x,y)\leftarrow \mu_X\otimes\mu_Y}{\ketbra{xy} \\
		& \quad \quad \qquad {} \otimes \br{\unitary{U}_x \otimes \unitary{V}_y}
		\ketbra{\varphi}
		\br{\adjoint{\unitary{U}_x} \otimes  \adjoint{\unitary{V}_y}}}} \\
		& \quad {} + \onenormbigg{\expecbig{(x,y)\leftarrow \mu_X\otimes\mu_Y}{\ketbra{xy} \\
		& \quad \quad \qquad {} \otimes \br{\unitary{U}_x \otimes \unitary{V}_y}
		\ketbra{\varphi} \br{\adjoint{\unitary{U}_x} \otimes \adjoint{\unitary{V}_y}}} \\
		& \quad \quad {} - \expecbig{(x,y)\leftarrow \mu}{\ketbra{xy} \\
		& \quad \quad \qquad {} \otimes \br{\unitary{U}_x \otimes \unitary{V}_y}
		\ketbra{\varphi} \br{\adjoint{\unitary{U}_x} \otimes \adjoint{\unitary{V}_y}}}}
		\label{eqn:triangle inequality 2} \\
		& \quad {} \leq 8 \sqrt{\ve} + 2 \onenorm{\mu - \mu_X \otimes \mu_Y}
	\end{align*}
	where the first inequality follows from the triangle inequality
	and at the last inequality we used \cref{eqn:intermediate bound}
	and \cref{fact:expecclose}.
\end{proof}

\section{Proof of the main result}
\label{sec:main result}

Let a game $G=(\mu,\finset{X},\finset{Y},\finset{A},\finset{B},V)$ be given.
We assume that the distribution $\mu = \mu_X \otimes \mu_Y$
is product across $\finset{X}$ and $\finset{Y}$.
Before the game starts, Alice and Bob share a pure state on the registers
$AE_A'BE_B'$, where $A$ and $B$ are used to store the answers for Alice and Bob, respectively.
After getting the inputs, Alice and Bob perform unitary operations independently
and then they measure registers $A$ and $B$.
The outcomes of the measurements are sent to the referee.
Now, let's consider the game $G^k$.
Let $x = x_1 \ldots x_k \in \finset{X}^k$, $y = y_1 \ldots y_k \in \finset{Y}^k$,
$a= a_1 \ldots a_k \in \finset{A}^k$, and $b = b_1 \ldots b_k \in \finset{B}^k$.
To make notations short, we denote $\mu(x,y)=\prod_i\mu(x_i,y_i)$ and
$V(x,y,a,b) = \prod_i V(x_i,y_i,a_i,b_i)$, whenever it is clear from the context.
Let $\finset{C}\subseteq[k]$ and let $\comp{\finset{C}}$ represent its {\em complement} in $[k]$.
Let $x_{\finset{C}}$ represent the substring of $x$ corresponding to the indices
in $\finset{C}$.
(Similarly, we will use $y_{\finset{C}}, a_{\finset{C}}, b_{\finset{C}}$.)
Let's define
\begin{align*}
	\ket{\startingst} &\defeq \sum_{x,y}\sqrt{\mu(x,y)}
	\ket{xxyy}^{\xtildesp X\ytildesp Y} \\
	& \qquad {} \otimes \sum_{a_{\finset{C}}b_{\finset{C}}}
	\ket{a_{\finset{C}}b_{\finset{C}}}^{A_{\finset{C}}B_{\finset{C}}}
	\otimes\ket{\gamma_{xya_{\finset{C}}b_{\finset{C}}}}^{E_AE_B}
\end{align*}
where $E_A\defeq E_A'A_{\comp{\finset{C}}}$, $E_B\defeq E_B'B_{\comp{\finset{C}}}$, and
$\sum_{a_{\finset{C}}b_{\finset{C}}}\ket{a_{\finset{C}}b_{\finset{C}}}\otimes\ket{\gamma_{xya_{\finset{C}}b_{\finset{C}}}}$
is the shared state after Alice and Bob performed their unitary operations
corresponding to questions $x$ and $y$.
(Note that $\ket{\gamma_{xya_{\finset{C}}b_{\finset{C}}}}$ is unnormalized.)
Consider the state
\begin{align*}
	\ket{\finalst} &\defeq \frac{1}{\sqrt{q}} \sum_{x,y} \sqrt{\mu(x,y)}
	\ket{xxyy}^{\xtildesp X\ytildesp Y} \\
	& \quad {} \otimes \sum_{a_\finset{C}b_{\finset{C}}:
	V(x_{\finset{C}},y_{\finset{C}},a_{\finset{C}},b_{\finset{C}})=1}
	\ket{a_{\finset{C}}b_{\finset{C}}}^{A_{\finset{C}}B_{\finset{C}}}
	\otimes \ket{\gamma_{xya_{\finset{C}}b_{\finset{C}}}}^{E_AE_B}
\end{align*}
where normalizer $q$ is the probability of success on $\finset{C}$.
\begin{lemma}\label{lem:lowrelent}
\begin{align*}
	&\expec{x_{\finset{C}}y_{\finset{C}}a_{\finset{C}}b_{\finset{C}}
	\leftarrow\finalst^{X_{\finset{C}}Y_{\finset{C}}A_{\finset{C}}B_{\finset{C}}}}
	{\relent{\finalst^{\tilde{X}_{\comp{\finset{C}}}\tilde{Y}_{\comp{\finset{C}}}XYE_AE_B}_{x_{\finset{C}}
	y_{\finset{C}}a_{\finset{C}}b_{\finset{C}}}}
	{{\startingst^{\tilde{X}_{\comp{\finset{C}}}\tilde{Y}_{\comp{\finset{C}}}
	XYE_AE_B}_{x_{\finset{C}}y_{\finset{C}}}}}} \\
	& \qquad \leq - \log q + \abs{\finset{C}} \cdot
	\fn{\log}{\abs{\finset{A}} \cdot \abs{\finset{B}}}.
\end{align*}
\end{lemma}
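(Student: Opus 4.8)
The plan is to reduce the claim to the relative min-entropy bound of \cref{lem:rmin is small} followed by a single convexity argument, after first pinning down how the conditional states of $\finalst$ and $\startingst$ are related. Abbreviate a question pair in $\finset{C}$ by $w = \br{x_{\finset{C}}, y_{\finset{C}}}$ and an answer pair by $ab = \br{a_{\finset{C}}, b_{\finset{C}}}$, and write $R \defeq \tilde{X}_{\comp{\finset{C}}}\tilde{Y}_{\comp{\finset{C}}}XYE_AE_B$ for the register on which the relative entropy is taken. For fixed $w$, let $p_w(ab)$ be the probability of seeing answers $ab$ in $\startingst$ conditioned on the questions in $\finset{C}$ being $w$, and let $q_w \defeq \sum_{ab : V(w,ab)=1} p_w(ab)$ be the success probability on $\finset{C}$ given $w$; then $q = \expec{w\leftarrow\mu_{\finset{C}}}{q_w}$, where $\mu_{\finset{C}}$ is the marginal of $\mu$ on the coordinates in $\finset{C}$. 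The structural observation I would rely on is that $\finalst$ is obtained from $\startingst$ only by deleting (and renormalizing) the losing branches on $\finset{C}$, so conditioning $\finalst$ on a \emph{winning} tuple $\br{w,ab}$ yields exactly the same state on $R$ as conditioning $\startingst$ on that tuple, i.e.\ $\finalst^{R}_{w\,ab} = \startingst^{R}_{w\,ab}$, while $\startingst^{R}_w = \sum_{ab} p_w(ab)\, \startingst^{R}_{w\,ab}$.

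First I would condition $\startingst$ on the questions $w$; since this is a projective measurement of the classical registers $X_{\finset{C}}Y_{\finset{C}}$ on a pure state, it leaves a pure state $\ket{\startingst_w}$. Applying \cref{lem:rmin is small} to $\ket{\startingst_w}$ with the binary measurement ``are the answers on $\finset{C}$ equal to $ab$?'' (which succeeds with probability $p_w(ab)$, and with the role of the unmeasured register $B$ in the lemma played by $R$) gives $\rminent{\startingst^{R}_{w\,ab}}{\startingst^{R}_w} \le \log\frac{1}{p_w(ab)}$. Since $\relent{\cdot}{\cdot}\le\rminent{\cdot}{\cdot}$, the key observation above yields
\[ \relent{\finalst^{R}_{w\,ab}}{\startingst^{R}_w} \le \log\frac{1}{p_w(ab)} . \]

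Next I would average over $\br{w,ab}$ drawn from $\finalst$. Because $\finalst$ assigns probability $\mu_{\finset{C}}(w)\,p_w(ab)/q$ to each winning tuple and zero elsewhere, the left-hand side of the lemma is bounded by $\expec{(w,ab)\leftarrow\finalst}{\log\frac{1}{p_w(ab)}}$. Splitting $\log\frac{1}{p_w(ab)} = \log\frac{1}{q_w} + \log\frac{q_w}{p_w(ab)}$ and noting that $p_w(ab)/q_w$ is precisely the conditional distribution of the winning answers under $\finalst$ given $w$, this expectation rearranges as
\[ \expec{(w,ab)\leftarrow\finalst}{\log\tfrac{1}{p_w(ab)}} = \expec{w\leftarrow\finalst}{\log\tfrac{1}{q_w}} + \mathrm{S}\br{A_{\finset{C}}B_{\finset{C}} \mid X_{\finset{C}}Y_{\finset{C}}}_{\finalst} . \]
The conditional entropy on the right is that of a distribution supported on $\finset{A}^{\abs{\finset{C}}}\times\finset{B}^{\abs{\finset{C}}}$, hence at most $\abs{\finset{C}}\cdot\log\br{\abs{\finset{A}}\cdot\abs{\finset{B}}}$, which is exactly the second term of the claimed bound.

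It then remains to show $\expec{w\leftarrow\finalst}{\log\frac{1}{q_w}} \le -\log q$, and this is the one step that requires care: a naive estimate would only produce a useless multiplicative factor $1/q$ rather than the additive $-\log q$. Writing the expectation over $\finalst$ explicitly, $\expec{w\leftarrow\finalst}{\log\frac{1}{q_w}} = \frac{1}{q}\sum_w \mu_{\finset{C}}(w)\br{-q_w\log q_w}$, and since $t\mapsto -t\log t$ is concave and $q = \expec{w\leftarrow\mu_{\finset{C}}}{q_w}$, Jensen's inequality gives $\sum_w \mu_{\finset{C}}(w)\br{-q_w\log q_w} \le -q\log q$, i.e.\ the expectation is at most $-\log q$. (Equivalently, this is the nonnegativity of the relative entropy between the $\finset{C}$-question distributions of $\finalst$ and $\startingst$.) Adding the two terms gives the claimed bound $-\log q + \abs{\finset{C}}\cdot\log\br{\abs{\finset{A}}\cdot\abs{\finset{B}}}$. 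I expect the main obstacle to be exactly this convexity step, together with the bookkeeping of keeping straight which distribution each expectation runs over — the reweighted, winning-only distribution induced by $\finalst$ versus the plain distribution induced by $\startingst$.
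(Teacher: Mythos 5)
Your proof is correct, and it reaches the bound by a genuinely different decomposition than the paper's. The paper stays at the level of the states unconditioned on $x_{\finset{C}}y_{\finset{C}}$: it applies \cref{lem:rmin is small} twice --- once to the global success measurement, giving $\rminent{\finalst}{\startingst}\leq -\log q$ on the relevant registers, and once to the measurement of $A_{\finset{C}}B_{\finset{C}}$ performed \emph{inside} $\finalst$ --- and chains the two via the subadditivity of relative min-entropy, $\rminent{\rho_1}{\sigma}\leq\rminent{\rho_1}{\rho}+\rminent{\rho}{\sigma}$; the resulting term $\expec{a_{\finset{C}}b_{\finset{C}}\leftarrow\finalst}{-\log p(a_{\finset{C}},b_{\finset{C}})}$ is the entropy $\fn{\mathrm{S}}{\finalst^{A_{\finset{C}}B_{\finset{C}}}}\leq\abs{\finset{C}}\log\br{\abs{\finset{A}}\abs{\finset{B}}}$, and only at the very last step does the paper introduce the conditioning on $x_{\finset{C}}y_{\finset{C}}$, by invoking the chain rule of \cref{fact:relative entropy splitting} and dropping a nonnegative classical relative-entropy term. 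You instead condition on the questions from the outset, exploit the structural identity $\finalst^{R}_{w\,ab}=\startingst^{R}_{w\,ab}$ on winning branches (which the paper never uses), apply \cref{lem:rmin is small} once per branch, and recover the $-\log q$ term separately via Jensen's inequality for $t\mapsto -t\log t$ (equivalently, nonnegativity of a classical relative entropy). Both arguments are sound. Yours trades the min-entropy triangle inequality for the explicit relation between $\finalst$ and $\startingst$ plus a classical convexity step, and yields the marginally sharper intermediate bound $\expec{w\leftarrow\finalst}{\log(1/q_w)}+\fn{\mathrm{S}}{A_{\finset{C}}B_{\finset{C}}\mid X_{\finset{C}}Y_{\finset{C}}}_{\finalst}$; the paper's argument is more black-box, in that it uses only the fact that $\finalst$ arises from $\startingst$ by a binary measurement succeeding with probability $q$, without unpacking the branch structure.
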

\begin{proof}
Note that, by Lemma~\ref{lem:rmin is small},
\[\rminent{\finalst^{\tilde{X}_{\comp{\finset{C}}}\tilde{Y}_{\comp{\finset{C}}}XYE_AE_B}}
{\startingst^{\tilde{X}_{\comp{\finset{C}}}\tilde{Y}_{\comp{\finset{C}}}XYE_AE_B}}\leq-\log q.\]
Let $p(a_{\finset{C}}, b_{\finset{C}})$ be the probability of obtaining
$(a_{\finset{C}}, b_{\finset{C}})$ when measuring registers $(A_{\finset{C}}, B_{\finset{C}})$ in $\ket{\finalst}$.
Consider,
\begin{align*}
	& \expec{a_{\finset{C}}b_{\finset{C}}\leftarrow\finalst^{A_{\finset{C}}B_{\finset{C}}}}
		{\rminent{\finalst^{\tilde{X}_{\comp{\finset{C}}}\tilde{Y}_{\comp{\finset{C}}}
		XYE_AE_B}_{a_{\finset{C}}b_{\finset{C}}}}
		{\startingst^{\tilde{X}_{\comp{\finset{C}}}\tilde{Y}_{\comp{\finset{C}}}XYE_AE_B}}} \\
	&\qquad \leq \expec{a_{\finset{C}}b_{\finset{C}}\leftarrow\finalst^{A_{\finset{C}}B_{\finset{C}}}}
		{\rminent{\finalst^{\tilde{X}_{\comp{\finset{C}}}\tilde{Y}_{\comp{\finset{C}}}
		XYE_AE_B}_{a_{\finset{C}}b_{\finset{C}}}}
		{\finalst^{\tilde{X}_{\comp{\finset{C}}}\tilde{Y}_{\comp{\finset{C}}}XYE_AE_B}} \right. \\
	& \qquad \qquad {} + \left. \rminent{\finalst^{\tilde{X}_{\comp{\finset{C}}}\tilde{Y}_{\comp{\finset{C}}}XYE_AE_B}}
		{\startingst^{\tilde{X}_{\comp{\finset{C}}}\tilde{Y}_{\comp{\finset{C}}}XYE_AE_B}}}\\
	&\qquad \leq \expec{a_{\finset{C}}b_{\finset{C}}\leftarrow\finalst^{A_{\finset{C}}B_{\finset{C}}}}
		{- \log p(a_{\finset{C}}, b_{\finset{C}}) - \log q} \\
	&\qquad = -\log q + \fn{\mathrm{S}}{\finalst^{A_{\finset{C}}B_{\finset{C}}}} \\
	&\qquad \leq -\log q+|\finset{C}|\cdot\br{\log|\finset{A}|+\log|\finset{B}|}.
\end{align*}
Now,
\begin{align*}
	&- \log q + \abs{\finset{C}} \cdot \fn{\log}{\abs{\finset{A}} \cdot \abs{\finset{B}}} \\
	& \quad \geq \expec{a_{\finset{C}}b_{\finset{C}}\leftarrow\finalst^{A_{\finset{C}}B_{\finset{C}}}}
		{\rminent{\finalst^{\tilde{X}_{\comp{\finset{C}}}\tilde{Y}_{\comp{\finset{C}}}
		XYE_AE_B}_{a_{\finset{C}}b_{\finset{C}}}}
		{\startingst^{\tilde{X}_{\comp{\finset{C}}}\tilde{Y}_{\comp{\finset{C}}}XYE_AE_B}}} \\
	& \quad \geq \expec{a_{\finset{C}}b_{\finset{C}}\leftarrow\finalst^{A_{\finset{C}}B_{\finset{C}}}}
		{\relent{\finalst^{\tilde{X}_{\comp{\finset{C}}}\tilde{Y}_{\comp{\finset{C}}}
		XYE_AE_B}_{a_{\finset{C}}b_{\finset{C}}}}
		{\startingst^{\tilde{X}_{\comp{\finset{C}}}\tilde{Y}_{\comp{\finset{C}}}XYE_AE_B}}} \\
	& \quad \geq \expec{x_{\finset{C}}y_{\finset{C}}a_{\finset{C}}b_{\finset{C}} \\ \leftarrow
		\finalst^{X_{\finset{C}}Y_{\finset{C}}A_{\finset{C}}B_{\finset{C}}}}
		{\relent{\finalst^{\tilde{X}_{\comp{\finset{C}}}\tilde{Y}_{\comp{\finset{C}}}XYE_AE_B}_{x_{\finset{C}}
		y_{\finset{C}}a_{\finset{C}}b_{\finset{C}}}}
		{{\startingst^{\tilde{X}_{\comp{\finset{C}}}\tilde{Y}_{\comp{\finset{C}}}
		XYE_AE_B}_{x_{\finset{C}}y_{\finset{C}}}}}}
\end{align*}
	where the last inequality follows from \cref{fact:relative entropy splitting}.
\end{proof}

	For each $ i \in \Br{k} $, let us define a binary random variable
	$ T_i \in \cbr{0,1} $, which indicates success in the $i$-th repetition.
	That is, $ T_i = \fn{V}{X_i, Y_i, A_i, B_i} $.
	Our main theorem will follow from the following lemma.

	\begin{lemma}\label{lem:mainlemmaproduct}
        Let $0.1 > \delta_1, \delta_2, \delta_3 > 0$ such that
        $\delta_3 = \delta_2 + \delta_1 \cdot \fn{\log}{\abs{\finset{A}} \cdot \abs{\finset{B}}}$.
        Let $ k' \defeq \floor{\delta_1 k} $.
        For any quantum strategy for the $k$-fold game $G^k$, there exists a set
        $\set{i_1, \ldots, i_{k'}}$, such that
	for each $ 1 \leq r \leq k'-1 $, either
		\begin{align*}
			\prob{\Tsr = 1} &\leq 2^{-\delta_2k}
			\intertext{or}
			\prob{ T_{i_{r+1}}=1 \middle\vert \Tsr = 1 }
			&\leq \omega^*(G)+12\sqrt{10 \delta_3}
		\end{align*}
		where $ \displaystyle \Tsr \defeq \prod_{j=1}^{r} T_{i_j} $.
	\end{lemma}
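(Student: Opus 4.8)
The plan is to choose the indices $i_1,\dots,i_{k'}$ greedily, arguing the stated dichotomy each time a new index is added. Suppose $i_1,\dots,i_r$ have been fixed and put $\finset{C}=\cbr{i_1,\dots,i_r}$, so that $\Tsr=\prod_{l=1}^{r}T_{i_l}$ is the indicator of success on $\finset{C}$ and $q\defeq\prob{\Tsr=1}$ is exactly the normalizer appearing in $\ket{\finalst}$. If $q\le 2^{-\delta_2 k}$ the first alternative already holds and $i_{r+1}$ may be taken to be any unused coordinate, so assume $q>2^{-\delta_2 k}$, i.e.\ $-\log q<\delta_2 k$. Since $\abs{\finset{C}}=r\le k'\le\delta_1 k$ and $\delta_3=\delta_2+\delta_1\cdot\fn{\log}{\abs{\finset{A}}\cdot\abs{\finset{B}}}$, \cref{lem:lowrelent} bounds the (expected) relative entropy of $\finalst$ against $\startingst$ on the registers $\tilde{X}_{\comp{\finset{C}}}\tilde{Y}_{\comp{\finset{C}}}XYE_AE_B$ by $-\log q+\abs{\finset{C}}\cdot\fn{\log}{\abs{\finset{A}}\cdot\abs{\finset{B}}}<\delta_3 k$. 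This one scalar is the entire information budget I am allowed to spend.

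The crucial point is that $\mu$ is product, so in $\startingst$ all $2k$ question registers are mutually independent; in particular each $X_j$ is independent of Bob's full system $\reg{B}_{\mathrm{tot}}$ and each $Y_j$ of Alice's full system $\reg{A}_{\mathrm{tot}}$. Restricting the above relative entropy to $X_{\comp{\finset{C}}}\reg{B}_{\mathrm{tot}}$ via \cref{fact:subsystem monotone} and then invoking \cref{fact:mutinf is min} — legitimate precisely because the reference $\startingst$ factorizes across this cut — gives $\mutinf{X_{\comp{\finset{C}}}}{\reg{B}_{\mathrm{tot}}}_{\finalst}\le\delta_3 k$, and symmetrically $\mutinf{Y_{\comp{\finset{C}}}}{\reg{A}_{\mathrm{tot}}}_{\finalst}\le\delta_3 k$. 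The chain rule expands $\mutinf{X_{\comp{\finset{C}}}}{\reg{B}_{\mathrm{tot}}}=\sum_{j}\condmutinf{X_j}{\reg{B}_{\mathrm{tot}}}{X_{<j}}$ into \emph{conditional} terms (here $X_{<j}$ denotes the $X$-registers of the coordinates of $\comp{\finset{C}}$ preceding $j$), whereas \cref{lem:two sided transformation} needs the \emph{unconditional} $\mutinf{X_j}{\reg{B}_{\mathrm{tot}}}$. I would close this gap by monotonicity, $\mutinf{X_j}{\reg{B}_{\mathrm{tot}}}\le\condmutinf{X_j}{\reg{B}_{\mathrm{tot}}}{X_{<j}}+\mutinf{X_j}{X_{<j}}$, together with the observation that $\sum_{j}\mutinf{X_j}{X_{<j}}$ is the total correlation of $\finalst^{X_{\comp{\finset{C}}}}$, which is at most $\relent{\finalst^{X_{\comp{\finset{C}}}}}{\startingst^{X_{\comp{\finset{C}}}}}\le\delta_3 k$ exactly because the product reference $\startingst$ has independent coordinates. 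Hence $\sum_{j\in\comp{\finset{C}}}\mutinf{X_j}{\reg{B}_{\mathrm{tot}}}_{\finalst}\le 2\delta_3 k$, symmetrically $\sum_j\mutinf{Y_j}{\reg{A}_{\mathrm{tot}}}_{\finalst}\le 2\delta_3 k$, and superadditivity of relative entropy over the product reference gives $\sum_j\relent{\finalst^{X_jY_j}}{\mu}\le\delta_3 k$. This extraction — turning one relative-entropy budget into simultaneous per-coordinate unconditional informations for \emph{both} players — is the crux, and is where product-ness is indispensable.

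Since $\abs{\comp{\finset{C}}}\ge k-k'\ge(1-\delta_1)k>0.9\,k$, these three nonnegative sums have combined total at most $5\delta_3 k$, so a Markov/averaging argument produces an index $j\in\comp{\finset{C}}$ — which I declare to be $i_{r+1}$ — at which $\mutinf{X_j}{\reg{B}_{\mathrm{tot}}}_{\finalst}$, $\mutinf{Y_j}{\reg{A}_{\mathrm{tot}}}_{\finalst}$ and $\relent{\finalst^{X_jY_j}}{\mu}$ are simultaneously at most $10\delta_3$. The last bound, through \cref{fact:mutinf is min} and the Pinsker-type inequality of \cref{fact:one norm and relent}, controls both the deviation of the question distribution from $\mu$, $\onenorm{\finalst^{X_jY_j}-\mu}\le\sqrt{10\delta_3}$, and its non-productness, $\onenorm{\finalst^{X_jY_j}-\finalst^{X_j}\otimes\finalst^{Y_j}}\le\sqrt{10\delta_3}$.

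It remains to turn coordinate $j$ into a single-copy strategy. Treating $\tilde{X}_jX_j$ and $\tilde{Y}_jY_j$ as the question registers and the rest of Alice's and Bob's systems as workspace, \cref{lem:two sided transformation} (with $\ve=10\delta_3$ and distribution $\finalst^{X_jY_j}$) supplies unitaries $\set{\unitary{U}_{x'}}$ and $\set{\unitary{V}_{y'}}$ so that, averaged over $(x',y')\leftarrow\finalst^{X_jY_j}$, applying $\unitary{U}_{x'}\otimes\unitary{V}_{y'}$ to $\ket{\finalst}$ reaches $\ket{\finalst_{x'y'}}$ (the state with $X_j,Y_j$ set to $x',y'$) up to $8\sqrt{10\delta_3}+2\sqrt{10\delta_3}$ in $\onenorm{\cdot}$. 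This defines a strategy $\Prot$ for one copy of $G$: share $\ket{\finalst}$, on input $(x',y')\leftarrow\mu$ apply $\unitary{U}_{x'}\otimes\unitary{V}_{y'}$, measure the answer registers $\reg{A}_j,\reg{B}_j$, and reply. Its winning probability lies within the transformation error of the $\finalst^{X_jY_j}$-average of the coordinate-$j$ success probability, namely $\prob{T_{i_{r+1}}=1\middle\vert\Tsr=1}$, and passing from $\finalst^{X_jY_j}$ to $\mu$ via \cref{fact:expecclose} costs a further $\onenorm{\finalst^{X_jY_j}-\mu}\le\sqrt{10\delta_3}$; the errors sum to at most $12\sqrt{10\delta_3}$. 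As $\Prot$ is a legal entangled strategy for $G$, its value is at most $\omega^*(G)$, yielding the second alternative $\prob{T_{i_{r+1}}=1\middle\vert\Tsr=1}\le\omega^*(G)+12\sqrt{10\delta_3}$. Beyond the extraction flagged above, I expect the genuine care to lie in the bookkeeping of the conditioned-on-$\finset{C}$ registers: \cref{lem:lowrelent} gives a relative entropy averaged over measuring $X_{\finset{C}}Y_{\finset{C}}A_{\finset{C}}B_{\finset{C}}$, and this averaging must be pushed through the square roots by concavity so that the contradiction is obtained for the true conditional value rather than for one fixed $\finset{C}$-outcome.
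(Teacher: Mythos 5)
Your proposal is correct in substance and shares the paper's skeleton — greedy coordinate selection, the information budget of \cref{lem:lowrelent}, a Markov argument to locate a good coordinate $j$, and an embedding via \cref{lem:two sided transformation} to contradict the definition of $\omega^*(G)$ — but the middle of your argument takes a genuinely different route. The paper keeps the register $R_j = X_{\finset{C}}Y_{\finset{C}}X_{<j}Y_{<j}A_{\finset{C}}B_{\finset{C}}$ in play as shared \emph{public coins}: it bounds the \emph{conditional} quantities $\condmutinf{X_j}{\Bob}{R_j}$, $\condmutinf{Y_j}{\Alice}{R_j}$, and $\expec{r_j}{\relent{\finalst^{X_jY_j}_{r_j}}{\startingst^{X_jY_j}}}$, applies \cref{lem:two sided transformation} to each conditioned state $\ket{\finalst_{r_j}}$, and then needs a fourth Markov quantity, $\expec{x_jy_j}{\onenorm{\finalst^{R_j}_{x_jy_j}-\finalst^{R_j}}}$, to pass through the hybrid chain $\Prot_1\to\Prot_2\to\Prot_3\to\Prot_4$ (decorrelating the coins from the questions). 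You instead dispense with public coins entirely and apply the transformation lemma directly to $\ket{\finalst}$, which requires the \emph{unconditional} informations $\mutinf{X_j}{\ytildesp YE_BB_{\finset{C}}}$ and $\mutinf{Y_j}{\xtildesp XE_AA_{\finset{C}}}$; you recover these from the conditional ones by paying the total-correlation term $\sum_j\mutinf{X_j}{X_{<j}}\le\relent{\finalst^{X_{\comp{\finset{C}}}}}{\mu_X^{\otimes\comp{\finset{C}}}}\le\delta_3k$, which is charged to the same budget. This buys a shorter protocol analysis (three Markov quantities instead of four, no coin-resampling step) at the cost of the extra bookkeeping you correctly identify as the crux. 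Two details you should make explicit: (i) the passage from the \emph{expected conditional} relative entropy of \cref{lem:lowrelent} to an unconditional mutual information such as $\mutinf{X_{\comp{\finset{C}}}}{\ytildesp YE_BB_{\finset{C}}}\le\delta_3k$ requires first re-attaching $X_{\finset{C}}Y_{\finset{C}}A_{\finset{C}}B_{\finset{C}}$ as classical registers via \cref{fact:relative entropy splitting}, checking that the resulting reference state still factorizes as $\mu_X^{\otimes\comp{\finset{C}}}\otimes(\text{Bob's registers})$ before \cref{fact:mutinf is min} may be invoked — this holds because $\mu$ is product and Alice's local operations do not affect Bob's marginal, but it is not literally a one-line application of \cref{fact:subsystem monotone}; and (ii) the chain-rule steps $\mutinf{X_j}{\cdot}\le\condmutinf{X_j}{\cdot}{X_{<j}}+\mutinf{X_j}{X_{<j}}$ need $X_{<j}$ to be classical, which holds in $\finalst$ only after tracing out the purifying copies $\xtildesp$, so the registers must be split as in the paper. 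With these filled in, your constants are in fact slightly better than $12\sqrt{10\delta_3}$, so the stated bound follows.
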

	\begin{proof}
		In the following, we assume that $1 \leq r < k'$.
		However, the same argument also works when $r=0$, \ie{}
		for identifying the first coordinate,
		which we skip for the sake of avoiding repetition.
		Suppose that we have already identified $r$ coordinates
		$ i_1, \ldots, i_r $ satisfying that
		\begin{align*}
			\prob{T_{i_1} = 1} &\leq \omega^*(G)+12\sqrt{10 \delta_3}
			\intertext{and}
			\prob{T_{i_{j+1}} = 1 \middle\vert T^{(j)} = 1}
			&\leq \omega^*(G)+12\sqrt{10 \delta_3}
		\end{align*}
		for $ 1 \leq j \leq r - 1 $.
		If $\prob{\Tsr = 1} \leq 2^{-\delta_2k}$ then we are done,
		so from now on, we assume that $\prob{\Tsr = 1}  >  2^{-\delta_2 k}$.
		Let $\finset{C} \defeq \set{i_1,\ldots, i_r}$.
		To simplify notations, let $\Alice\defeq \xtildesp_{\comp{\finset{C}}} XE_A$, $\Bob\defeq\ytildesp_{\comp{\finset{C}}} YE_B$, and $R_i\defeq X_{\finset{C}}Y_{\finset{C}}X_{<i}Y_{<i}A_{\finset{C}}B_{\finset{C}}$.
		For coordinate $i$, let $\ket{\finalst_{x_{<i}y_{<i}}}$ be the pure state
		that results when we measure registers $X_{<i}Y_{<i}$ (\ie{} registers
		$X_1, \ldots, X_{i-1}, Y_1, \ldots, Y_{i-1}$) in $\ket{\finalst}$
		and get outcome $x_{<i}y_{<i}$.
		We argue now that for a typical coordinate outside $\finset{C}$,
		the distribution of questions is close to $\mu$ in the state $\finalst$.
		We also prove that, for this coordinate, the questions and $R_i$
		are almost independent.
		From \cref{lem:lowrelent}, we get that
		\begin{align}
			\delta_3k &\geq \delta_2 k + \abs{\finset{C}} \cdot
				\fn{\log}{\abs{\finset{A}} \cdot \abs{\finset{B}}} \nonumber \\
			&\geq \expec{x_{\finset{C}}y_{\finset{C}}a_{\finset{C}}b_{\finset{C}} \\
				\leftarrow\finalst^{X_{\finset{C}}Y_{\finset{C}}A_{\finset{C}}B_{\finset{C}}}}
				{\relent{\finalst^{\tilde{X}_{\comp{\finset{C}}}\tilde{Y}_{\comp{\finset{C}}}XYE_AE_B}_{x_{\finset{C}}
				y_{\finset{C}}a_{\finset{C}}b_{\finset{C}}}}
				{{\startingst^{\tilde{X}_{\comp{\finset{C}}}\tilde{Y}_{\comp{\finset{C}}}
				XYE_AE_B}_{x_{\finset{C}}y_{\finset{C}}}}}}
				\label{eqn:relative entropy bound} \\
			&\geq \expec{x_{\finset{C}}y_{\finset{C}}a_{\finset{C}}b_{\finset{C}}\leftarrow
				\finalst^{X_{\finset{C}}Y_{\finset{C}}A_{\finset{C}}B_{\finset{C}}}}
				{\relent{\finalst^{XY}_{x_{\finset{C}}y_{\finset{C}}a_{\finset{C}}b_{\finset{C}}}}
				{{\startingst^{XY}_{x_{\finset{C}}y_{\finset{C}}}}}}
				\label{eqn:relative entropy monotone} \\
			&=\sum_{i\notin\finset{C}}\expec{r_i\leftarrow\finalst^{R_i}}
				{\relent{\finalst^{X_iY_i}_{r_i}}{\startingst^{X_iY_i}}}
				\label{eqn:distcloseproduct}\\
			&=\sum_{i\notin\finset{C}}\relent{\finalst^{X_iY_iR_i}}
				{\finalst^{R_i}\otimes\startingst^{X_iY_i}}
				\label{eqn:split to tensor} \\
			&\geq \sum_{i \notin \finset{C}} \relent{\finalst^{X_iY_iR_i}}
				{\finalst^{R_i} \otimes \finalst^{X_iY_i}}
				\label{eqn:mutual inf relative entropy} \\
			&\geq \sum_{i \notin \finset{C}} \expec{x_i y_i \leftarrow \finalst^{X_iY_i}}
				{\relent{\finalst^{R_i}_{x_i y_i}}
				{\finalst^{R_i}}}
				\label{eqn:only Ri} \\
			&\geq \sum_{i \notin \finset{C}} \expec{x_i y_i \leftarrow \finalst^{X_iY_i}}
				{\onenorm{\finalst^{R_i}_{x_i y_i} - \finalst^{R_i}}^2}
				\label{eqn:go to one norm} \\
			&\geq \sum_{i \notin \finset{C}} \br{\expec{x_i y_i \leftarrow \finalst^{X_iY_i}}
				{\onenorm{\finalst^{R_i}_{x_i y_i} - \finalst^{R_i}}}}^2
 				\label{eqn:publiccoins}
		\end{align}
		where \cref{eqn:relative entropy bound} follows from \cref{lem:lowrelent},
		\cref{eqn:relative entropy monotone} follows from \cref{fact:subsystem monotone},
		\cref{eqn:distcloseproduct,eqn:split to tensor,eqn:only Ri} follow from
		\cref{fact:relative entropy splitting}, \cref{eqn:mutual inf relative entropy}
		follows from \cref{fact:mutinf is min}, \cref{eqn:go to one norm}
		follows from \cref{fact:one norm and relent}, and \cref{eqn:publiccoins}
		follows from the convexity of the function $\alpha^2$.
		Next, we argue that for a typical coordinate outside $\finset{C}$,
		the information between Alice's questions and Bob's
		registers is small in \ket{\finalst}.
		Again, from \cref{lem:lowrelent,fact:subsystem monotone}, we get that
		\begin{align}
			\delta_3 k &\geq \expec{x_{\finset{C}}y_{\finset{C}}a_{\finset{C}}b_{\finset{C}}\leftarrow
				\finalst^{X_{\finset{C}}Y_{\finset{C}}A_{\finset{C}}B_{\finset{C}}}}
				{\relent{\finalst^{X\Bob}_{x_{\finset{C}}y_{\finset{C}}a_{\finset{C}}b_{\finset{C}}}}
				{{\startingst^{X\Bob}_{x_{\finset{C}}y_{\finset{C}}}}}}\nonumber\\
			&\geq \condmutinf{X}{\Bob}{R_1}_{\finalst}
				\label{eqn:go to mutual inf} \\
			&\geq\sum_{i\notin\finset{C}} \condmutinf{X_i}{\Bob}{R_1X_{<i}} _{\finalst}
				\label{eqn:chain rule mutual inf} \\
			&\geq\sum_{i\notin\finset{C}} \condmutinf{X_i}{\Bob}{R_i}_{\finalst}
				\label{eqn:Alice input independent of Bob}
		\end{align}
		where at \cref{eqn:go to mutual inf} we used \cref{fact:mutinf is min}
		and the fact that
		$ \startingst^{X\Bob}_{x_{\finset{C}}y_{\finset{C}}}
		= \startingst^{X}_{x_{\finset{C}}y_{\finset{C}}}
		\otimes \startingst^{\Bob}_{x_{\finset{C}}y_{\finset{C}}}$.
		\Cref{eqn:chain rule mutual inf,eqn:Alice input independent of Bob}
		follow from the chain rule for
		the mutual information and at \cref{eqn:Alice input independent of Bob}
		we also used the observation that \Bob{} contains register $Y$.
		Similarly to the above, for Bob's questions we have
		\begin{align}
			\delta_3 k \geq \sum_{i \notin \finset{C}} \condmutinf{Y_i}{\Alice}{R_i}_{\finalst} .
			\label{eqn:Bob input independent of Alice}
		\end{align}
		From \cref{eqn:distcloseproduct,eqn:publiccoins,eqn:Alice input
		independent of Bob,eqn:Bob input independent of Alice} and using standard application of Markov's inequality,
		we get that there exists a coordinate $ j \notin \finset{C} $ such that
		\begin{align}
			\expec{r_j\leftarrow\finalst^{R_j}}
				{\relent{\finalst^{X_jY_j}_{r_j}}{\startingst^{X_jY_j}}}
				\leq \frac{5 \delta_3}{1- \delta_1}
				&\leq 10\delta_3
				\label{eqn:eq2}\\
			\expec{x_j y_j \leftarrow \finalst^{X_j Y_j}}
				{\onenorm{\finalst^{R_j}_{x_j y_j} - \finalst^{R_j}}}
				\leq \sqrt{\frac{5 \delta_3}{1- \delta_1}}
				&\leq \sqrt{10\delta_3}
				\label{eqn:eq1}\\
			\condmutinf{X_j}{\Bob}{R_j}_{\finalst}
				\leq \frac{5 \delta_3}{1- \delta_1}
				&\leq 10\delta_3
				\label{eqn:eq3}\\
			\condmutinf{Y_j}{\Alice}{R_j}_{\finalst}
				\leq \frac{5 \delta_3}{1- \delta_1}
				&\leq 10\delta_3 .
				\label{eqn:eq4}
		\end{align}
		Let $\ket{\finalst_{r_j}}$ be the pure state that we get when we measure
		register $R_j$ in $\ket{\finalst}$ and get outcome $r_j$.

		Suppose that there exists a protocol $\Prot_0$ for $G^k$
			which wins all coordinates in $\finset{C}$ with probability
			greater than $2^{-\delta_2k}$.
			Moreover, conditioning on success on all coordinates in
			$\finset{C}$, the probability it wins the game in
			the $j$-th coordinate is $\omega$.

		\begin{itemize}
			\item Let us construct a new protocol $\Prot_1$, that starts with
			the joint state $\finalst^{X_jY_jR_jE_AE_B}$, where $X_jE_A$ and $Y_jE_B$
			are given to Alice and Bob, respectively, and $R_j$ is shared between them.
			From our assumption, the probability that Alice and Bob win the game
			in the $j$-th coordinate is $\omega$.
			\item Let us consider a new protocol $\Prot_2$, where Alice and Bob
			are given questions $(x_j,y_j) \leftarrow \finalst^{X_jY_j}$ and
			they share $r_j \leftarrow \finalst^{R_j}_{x_jy_j}$ as public coins.
			By \cref{lem:two sided transformation}, they are able to create
			a joint state that is close to the starting state of $\Prot_1$
			by sharing $\ket{\finalst_{r_j}}$ and applying local unitary operations.
			More concretely, \cref{eqn:eq3,eqn:eq4} show the conditions
			for the mutual informations required by \cref{lem:two sided transformation}.
			From \cref{eqn:eq2}, we can get
			\begin{align}
				10 \delta_3 &\geq
					\expec{r_j\leftarrow\finalst^{R_j}}
					{\relent{\finalst^{X_jY_j}_{r_j}}{\startingst^{X_jY_j}}}
					\nonumber \\
				&\geq \expec{r_j\leftarrow\finalst^{R_j}}
					{\relent{\finalst^{X_jY_j}_{r_j}}
					{\finalst^{X_j}_{r_j} \otimes \finalst^{Y_j}_{r_j}}}
					\label{eqn:split to tensor 2} \\
				&\geq \expec{r_j\leftarrow\finalst^{R_j}}
					{\onenorm{\finalst^{X_jY_j}_{r_j} -
					\finalst^{X_j}_{r_j} \otimes \finalst^{Y_j}_{r_j}}^2}
					\label{eqn:go to one norm 2} \\
				&\geq \br{\expec{r_j\leftarrow\finalst^{R_j}}
					{\onenorm{\finalst^{X_jY_j}_{r_j} -
					\finalst^{X_j}_{r_j} \otimes \finalst^{Y_j}_{r_j}}}}^2
					\label{eqn:convexity}
			\end{align}
			where \cref{eqn:split to tensor 2} follows from \cref{fact:mutinf is min},
			\cref{eqn:go to one norm 2} follows from \cref{fact:one norm and relent},
			and at \cref{eqn:convexity} we used the convexity of the function $\alpha^2$.
			This implies
			\[ \expec{r_j\leftarrow\finalst^{R_j}}
			{\onenorm{\finalst^{X_jY_j}_{r_j}-\finalst^{X_j}_{r_j}\otimes\finalst^{Y_j}_{r_j}}}
			\leq\sqrt{10\delta_3}.\]
			Thus, using the above and~\cref{lem:two sided transformation},
			we conclude that they can win the game with probability at least $\omega-10\sqrt{10\delta_3}$.
			\item Let us construct a new protocol $\Prot_3$, where Alice and Bob are given
			questions $(x_j,y_j) \leftarrow \finalst^{X_jY_j}$.
			They share public coins $r_j \leftarrow \finalst^{R_j}$ and execute the same
			strategy as in $\Prot_2$.
			By \cref{eqn:eq1}, the probability that they win the game is at least
			$\omega-11\sqrt{10\delta_3}$.
			\item Let us consider a new protocol $\Prot_4$, where Alice and Bob
			are given questions $(x,y) \leftarrow \mu$
			and they execute the same strategy as in $\Prot_3$.
			By \cref{eqn:eq2,fact:one norm and relent}, the probability that they
			win the game is at least $\omega-12\sqrt{10\delta_3}$.
			Note that $\Prot_4$ is a strategy for game $G$ under distribution $\mu$.
			This means that $\omega-12\sqrt{10\delta_3}\leq\omega^*(G)$.
		\end{itemize}
		We conclude the lemma.
	\end{proof}

	We can now prove our main result.
	We restate it here for convenience.
\begin{maintheorem}
	Let $\ve > 0$.
	Given a game $G$ with value $\omega^*(G) \leq 1 - \ve$, it holds that
	\begin{align*}
		\fn{\omega^*}{G^k} &\leq \br{1-\frac{\ve}{2}}^{\frac{\ve^2k}
		{12000\br{\log|\finset{A}|+\log|\finset{B}|}}} \\
		&= \br{1-\ve^3}^{\bigomega{\frac{k}{\log|\finset{A}|+\log|\finset{B}|}}}.
	\end{align*}
\end{maintheorem}
\begin{proof}
	We set $\delta_1=\frac{\ve^2}{12000(\log|\finset{A}|+\log|\finset{B}|)}$,
	$\delta_2=\frac{\ve^2}{12000}$, and $\delta_3=\frac{\ve^2}{6000}$.
	Given any strategy for $G^k$, using \cref{lem:mainlemmaproduct},
	either $\fn{\omega^*}{G^k} \leq 2^{-\delta_2k}$, or there are
	\floor{\delta_1 k} coordinates $\set{i_1,\ldots,i_{\floor{\delta_1 k}}}$
	such that the probability Alice and Bob win the $i_j$-th coordinate,
	conditioning on success on all the previous coordinates,
	is at most $1-\ve/2$.
	This finishes the proof of the theorem.
\end{proof}

\section*{Acknowledgements}

	We thank Oded Regev, Thomas Vidick, and anonymous referees for useful comments.
	This work is supported by the Singapore Ministry of Education
	Tier 3 Grant and the Core Grants of the Center for Quantum
	Technologies, Singapore.

\bibliographystyle{IEEEtran}
\bibliography{references}

% Generated by IEEEtran.bst, version: 1.13 (2008/09/30)
\begin{thebibliography}{10}
\providecommand{\url}[1]{#1}
\csname url@samestyle\endcsname
\providecommand{\newblock}{\relax}
\providecommand{\bibinfo}[2]{#2}
\providecommand{\BIBentrySTDinterwordspacing}{\spaceskip=0pt\relax}
\providecommand{\BIBentryALTinterwordstretchfactor}{4}
\providecommand{\BIBentryALTinterwordspacing}{\spaceskip=\fontdimen2\font plus
\BIBentryALTinterwordstretchfactor\fontdimen3\font minus
  \fontdimen4\font\relax}
\providecommand{\BIBforeignlanguage}[2]{{%
\expandafter\ifx\csname l@#1\endcsname\relax
\typeout{** WARNING: IEEEtran.bst: No hyphenation pattern has been}%
\typeout{** loaded for the language `#1'. Using the pattern for}%
\typeout{** the default language instead.}%
\else
\language=\csname l@#1\endcsname
\fi
#2}}
\providecommand{\BIBdecl}{\relax}
\BIBdecl

\bibitem{Raz:1995:PRT:225058.225181}
\BIBentryALTinterwordspacing
R.~Raz, ``A parallel repetition theorem,'' in \emph{Proceedings of the
  twenty-seventh annual ACM Symposium on Theory of Computing}, ser. STOC '95,
  New York, NY, USA, 1995, pp. 447--456. [Online]. Available:
  \url{http://doi.acm.org/10.1145/225058.225181}
\BIBentrySTDinterwordspacing

\bibitem{Holenstein2007}
\BIBentryALTinterwordspacing
T.~Holenstein, ``Parallel repetition: simplifications and the no-signaling
  case,'' in \emph{Proceedings of the thirty-ninth annual ACM Symposium on
  Theory of Computing}, ser. STOC '07, New York, NY, USA, 2007, pp. 411--419.
  [Online]. Available: \url{http://doi.acm.org/10.1145/1250790.1250852}
\BIBentrySTDinterwordspacing

\bibitem{Jain:2008}
R.~Jain, J.~Radhakrishnan, and P.~Sen, ``Optimal direct sum and privacy
  trade-off results for quantum and classical communication complexity,''
  \emph{CoRR}, vol. abs/0807.1267, 2008.

\bibitem{Ben-Or:1988:MIP:62212.62223}
\BIBentryALTinterwordspacing
M.~Ben-Or, S.~Goldwasser, J.~Kilian, and A.~Wigderson, ``Multi-prover
  interactive proofs: How to remove intractability assumptions,'' in
  \emph{Proceedings of the Twentieth Annual ACM Symposium on Theory of
  Computing}, ser. STOC '88, New York, NY, USA, 1988, pp. 113--131. [Online].
  Available: \url{http://doi.acm.org/10.1145/62212.62223}
\BIBentrySTDinterwordspacing

\bibitem{Fortnow:1989}
L.~Fortnow, ``Complexity-theoretic aspects of interactive proof systems,''
  Ph.D. dissertation, Massachusetts Institute of Technology, 1989.

\bibitem{Arora:1998:PCP:273865.273901}
\BIBentryALTinterwordspacing
S.~Arora and S.~Safra, ``Probabilistic checking of proofs: A new
  characterization of {NP},'' \emph{Journal of the ACM}, vol.~45, no.~1, pp.
  70--122, Jan. 1998. [Online]. Available:
  \url{http://doi.acm.org/10.1145/273865.273901}
\BIBentrySTDinterwordspacing

\bibitem{Arora:1998:PVH:278298.278306}
\BIBentryALTinterwordspacing
S.~Arora, C.~Lund, R.~Motwani, M.~Sudan, and M.~Szegedy, ``Proof verification
  and the hardness of approximation problems,'' \emph{Journal of the ACM},
  vol.~45, no.~3, pp. 501--555, May 1998. [Online]. Available:
  \url{http://doi.acm.org/10.1145/278298.278306}
\BIBentrySTDinterwordspacing

\bibitem{Dinur:2007:PTG:1236457.1236459}
\BIBentryALTinterwordspacing
I.~Dinur, ``The {PCP} theorem by gap amplification,'' \emph{Journal of the
  ACM}, vol.~54, no.~3, Jun. 2007. [Online]. Available:
  \url{http://doi.acm.org/10.1145/1236457.1236459}
\BIBentrySTDinterwordspacing

\bibitem{Rao2008}
\BIBentryALTinterwordspacing
A.~Rao, ``Parallel repetition in projection games and a concentration bound,''
  in \emph{Proceedings of the 40th annual ACM Symposium on Theory of
  Computing}, ser. STOC '08, New York, NY, USA, 2008, pp. 1--10. [Online].
  Available: \url{http://doi.acm.org/10.1145/1374376.1374378}
\BIBentrySTDinterwordspacing

\bibitem{Raz2008}
\BIBentryALTinterwordspacing
R.~Raz, ``A counterexample to strong parallel repetition,'' in
  \emph{Proceedings of the 2008 49th Annual IEEE Symposium on Foundations of
  Computer Science}, Washington, DC, USA, 2008, pp. 369--373. [Online].
  Available: \url{http://portal.acm.org/citation.cfm?id=1470582.1470676}
\BIBentrySTDinterwordspacing

\bibitem{Barak:2009:SPR:1616497.1616526}
\BIBentryALTinterwordspacing
B.~Barak, A.~Rao, R.~Raz, R.~Rosen, and R.~Shaltiel, ``Strong parallel
  repetition theorem for free projection games,'' in \emph{Proceedings of the
  12th International Workshop and 13th International Workshop on Approximation,
  Randomization, and Combinatorial Optimization. Algorithms and Techniques},
  ser. APPROX '09 / RANDOM '09.\hskip 1em plus 0.5em minus 0.4em\relax Berlin,
  Heidelberg: Springer-Verlag, 2009, pp. 352--365. [Online]. Available:
  \url{http://dx.doi.org/10.1007/978-3-642-03685-9_27}
\BIBentrySTDinterwordspacing

\bibitem{Raz:2012}
R.~Raz and R.~Rosen, ``A strong parallel repetition theorem for projection
  games on expanders,'' \emph{2013 IEEE Conference on Computational
  Complexity}, vol.~0, pp. 247--257, 2012.

\bibitem{CHSH:69}
\BIBentryALTinterwordspacing
J.~F. Clauser, M.~A. Horne, A.~Shimony, and R.~A. Holt, ``Proposed experiment
  to test local hidden-variable theories,'' \emph{Phys. Rev. Lett.}, vol.~23,
  pp. 880--884, Oct. 1969. [Online]. Available:
  \url{http://link.aps.org/doi/10.1103/PhysRevLett.23.880}
\BIBentrySTDinterwordspacing

\bibitem{Esther:2010}
E.~H\"{a}nggi and R.~Renner, ``Device-independent quantum key distribution with
  commuting measurements,'' \emph{CoRR}, vol. abs/1009.1833, 2010.

\bibitem{TomamichelFKW13}
M.~Tomamichel, S.~Fehr, J.~Kaniewski, and S.~Wehner, ``A
  monogamy-of-entanglement game with applications to device-independent quantum
  cryptography,'' \emph{New Journal of Physics}, vol.~15, 2013.

\bibitem{MasanesPA11}
L.~Masanes, S.~Pironio, and A.~Ac{\'{i}}n, ``Secure device-independent quantum
  key distribution with causally independent measurement devices,''
  \emph{Nature Communications}, vol.~2, 2011.

\bibitem{Cleve:2008:PPR:1391349.1391350}
\BIBentryALTinterwordspacing
R.~Cleve, W.~Slofstra, F.~Unger, and S.~Upadhyay, ``Perfect parallel repetition
  theorem for quantum xor proof systems,'' \emph{Comput. Complex.}, vol.~17,
  no.~2, pp. 282--299, May 2008. [Online]. Available:
  \url{http://dx.doi.org/10.1007/s00037-008-0250-4}
\BIBentrySTDinterwordspacing

\bibitem{Kempe2010}
\BIBentryALTinterwordspacing
J.~Kempe, O.~Regev, and B.~Toner, ``Unique games with entangled provers are
  easy,'' \emph{SIAM Journal on Computing}, vol.~39, pp. 3207--3229, Jul. 2010.
  [Online]. Available: \url{http://dx.doi.org/10.1137/090772885}
\BIBentrySTDinterwordspacing

\bibitem{DinurSteurerVidick:2013}
I.~Dinur, D.~Steurer, and T.~Vidick, ``A parallel repetition theorem for
  entangled projection games,'' \emph{CoRR}, vol. abs/1310.4113, 2013.

\bibitem{DinurSteurer:2013}
I.~Dinur and D.~Steurer, ``Analytical approach to parallel repetition,''
  \emph{CoRR}, vol. abs/1305.1979, 2013.

\bibitem{Kempe:2011:PRE:1993636.1993684}
\BIBentryALTinterwordspacing
J.~Kempe and T.~Vidick, ``Parallel repetition of entangled games,'' in
  \emph{Proceedings of the 43rd Annual ACM Symposium on Theory of Computing},
  ser. STOC '11, New York, NY, USA, 2011, pp. 353--362. [Online]. Available:
  \url{http://doi.acm.org/10.1145/1993636.1993684}
\BIBentrySTDinterwordspacing

\bibitem{Feige:2000:TPE:586850.587040}
\BIBentryALTinterwordspacing
U.~Feige and J.~Kilian, ``Two-prover protocols---low error at affordable
  rates,'' \emph{SIAM Journal on Computing}, vol.~30, no.~1, pp. 324--346, Apr.
  2000. [Online]. Available: \url{http://dx.doi.org/10.1137/S0097539797325375}
\BIBentrySTDinterwordspacing

\bibitem{Chailloux:2013}
A.~Chailloux and G.~Scarpa, ``Parallel repetition of entangled games with
  exponential decay via the superposed information cost,'' \emph{CoRR}, vol.
  abs/1310.7787, 2013.

\bibitem{CoverT91}
T.~M. Cover and J.~A. Thomas, \emph{Elements of Information Theory}, ser. Wiley
  Series in Telecommunications.\hskip 1em plus 0.5em minus 0.4em\relax New
  York, NY, USA: John Wiley \& Sons, 1991.

\bibitem{NielsenC00}
M.~A. Nielsen and I.~L. Chuang, \emph{Quantum Computation and Quantum
  Information}.\hskip 1em plus 0.5em minus 0.4em\relax Cambridge, UK: Cambridge
  University Press, 2000.

\bibitem{Watrouslecturenote}
\BIBentryALTinterwordspacing
J.~Watrous, ``Theory of {Q}uantum {I}nformation, lecture notes,'' 2011.
  [Online]. Available: \url{https://cs.uwaterloo.ca/~watrous/LectureNotes.html}
\BIBentrySTDinterwordspacing

\bibitem{Jain:2003:LBB:946243.946331}
\BIBentryALTinterwordspacing
R.~Jain, J.~Radhakrishnan, and P.~Sen, ``A lower bound for the bounded round
  quantum communication complexity of set disjointness,'' in \emph{Proceedings
  of the 44th Annual IEEE Symposium on Foundations of Computer Science}, ser.
  FOCS '03, Washington, DC, USA, 2003, pp. 220--229. [Online]. Available:
  \url{http://dl.acm.org/citation.cfm?id=946243.946331}
\BIBentrySTDinterwordspacing

\end{thebibliography}

\end{document}